\newcommand{\noun}[1]{$\textsc{#1}$}
\definecolor{darkgreen}{rgb}{0,0.8,0}
\renewcommand{\vec}{\mathaccent"017E }
\def\eps{\varepsilon}
\def\ie{{i.\,e.}}
\def\eg{{e.\,g.}}
\def\SSPA{Successive Shortest Path Algorithm}
\def\SM{Simplex Method}
\def\SA{Simplex Algorithm}
\def\NSM{Network \SM}
\def\NSA{Network \SA}
\newtheorem{theorem}{Theorem}
\newtheorem{lemma}{Lemma}
\newtheorem{corollary}{Corollary}
\newtheorem{definition}{Definition}
\newtheorem{proposition}{Proposition}
\newcommand{\reducespace}{\vspace{-2ex}} 
\tikzstyle{node}=[circle, inner sep = 0pt, minimum size = 0.9em, fill]
\tikzstyle{arc}=[->,very thick]
\begin{document}

\title{The Simplex Algorithm is NP-mighty}

\author{
  Yann Disser \and
  Martin Skutella
}


\maketitle


\begin{abstract}
We propose to classify the power of algorithms by the complexity of the problems that they can be used to solve. Instead of restricting to the problem a particular algorithm was designed to solve \emph{explicitly}, however, we include problems that, with polynomial overhead, can be solved `\emph{implicitly}' during the algorithm's execution. For example, we allow to solve a decision problem by suitably transforming the input, executing the algorithm, and observing whether a specific bit in its internal configuration ever switches during the execution. 

We show that the \SM{}, the \NSM{} (both with Dantzig's original pivot rule), and the \SSPA{} are NP-mighty, that is, each of these algorithms can be used to solve any problem in~NP. This result casts a more favorable light on these algorithms' exponential worst-case running times. Furthermore, as a consequence of our approach, we obtain several novel hardness results. For example, for a given input to the \SA, deciding whether a given variable ever enters the basis during the algorithm's execution and determining the number of iterations needed are both NP-hard problems. Finally, we close a long-standing open problem in the area of network flows over time by showing that earliest arrival flows are NP-hard to obtain.
\end{abstract}

\section{Introduction}

Understanding the complexity of algorithmic problems is a central challenge in the theory of computing. Traditionally, complexity theory operates from the point of view of the problems we encounter in the world, by considering a fixed problem and asking how \emph{nice} an algorithm the problem admits with respect to running time, memory consumption, robustness to uncertainty in the input, determinism, etc. In this paper we advocate a different perspective by considering a particular algorithm and asking how powerful (or \emph{mighty}) the algorithm is, \ie, what the most difficult problems are that the algorithm can be used to solve `implicitly' during its execution.

\reducespace
\paragraph{\emph{Related literature.}} A traditional approach to capturing the mightiness of an algorithm is to ask how difficult the exact problem is that the algorithm was designed to solve, \ie, what is the complexity of predicting the final output of the algorithm. For optimization problems, however, if there are multiple optimum solutions to an instance, predicting which optimum solution a specific algorithm will produce might be more difficult than finding an optimum solution in the first place. If this is the case, the algorithm can be considered to be mightier than the problem it is solving suggests. A prominent example for this phenomenon are search algorithms for problems in the complexity class \noun{PLS} (for \emph{polynomial local search}), introduced by Johnson, Papadimitriou, and Yannakakis~\cite{PapadimitriouEtAl88}. Many problems in \noun{PLS} are complete with respect to so-called \emph{tight} reductions, which implies that finding any optimum solution reachable from a specific starting solution via local search is \noun{PSPACE}-complete~\cite{PapadimitriouEtAl90}. Any local search algorithm for such a problem can thus be considered to be \noun{PSPACE}-mighty. Recently, in a remarkable paper by Goldberg, Papadimitriou, and Savani~\cite{GoldbergEtAl13}, similar \noun{PSPACE}-completeness results were established for algorithms solving search problems in the complexity class \noun{PPAD} (for \emph{polynomial parity argument in directed graphs}~\cite{Papadimitriou94}), and in particular for the well-known Lemke-Howson algorithm~\cite{LemkeHowson64} for finding Nash equilibria in bimatrix games.

\reducespace
\paragraph{\emph{A novel approach.}}
We take the analysis of the power of algorithms one step further and argue that the mightiness of an algorithm should not only be classified by the complexity of the exact problem the algorithm is solving, but rather by the most complex problem that the algorithm can be made to solve \emph{implicitly}. In particular, we do not consider the algorithm as a pure black box that turns a given input into a well-defined output. Instead, we are interested in the entire process of computation (\ie, the sequence of the algorithm's internal states) that leads to the final output, and ask how meaningful this process is in terms of valuable information that can be drawn from it. As we show in this paper, sometimes very limited information on an algorithm's process of computation can be used to solve problems that are considerably more complex than the problem the algorithm was actually designed for.

We define the mightiness of an algorithm via the problem of greatest complexity that it can solve \emph{implicitly} in this way, and, in particular, we say that an algorithm is NP-mighty if it implicitly solves all problems in NP (precise definitions are given below). Note that in order to make mightiness a meaningful concept, we need to make sure that mindless exponential algorithms like simple counters do not qualify as being NP-mighty, while algorithms that explicitly solve hard problems do. This goal is achieved by carefully restricting the allowed computational overhead as well as the access to the algorithm's process of computation.

\reducespace
\paragraph{\emph{Considered algorithms.}}
For an algorithm's mightiness to lie beyond the complexity class of the problem it was designed to solve, its running time must be excessive for this complexity class. Most algorithms that are inefficient in this sense would quickly be disregarded as wasteful and not meriting further investigation. Dantzig's \SM{}~\cite{Dantzig-Simplex51} is a famous exception to this rule. Empirically it belongs to the most efficient methods for solving linear programs. However, Klee and Minty~\cite{KleeMinty72} showed that the \SA{} with Dantzig's original pivot rule exhibits exponential worst-case behavior. Similar results are known for many other popular pivot rules; see, \eg, Amenta and Ziegler~\cite{AmentaZiegler96}. 
On the other hand, by the work of Khachiyan~\cite{Khachiyan79,Khachiyan80} and later Karmarkar~\cite{Karmarkar84}, it is known that linear programs can be solved in polynomial time. Spielman and Teng~\cite{SpielmanTeng04} developed the concept of smoothed analysis in order to explain the practical efficiency of the \SM{} despite its poor worst-case behavior.


Minimum-cost flow problems form a class of linear programs featuring a particularly rich combinatorial structure allowing for numerous specialized algorithms. The first such algorithm is Dantzig's \NSM{}~\cite{Dantzig62} which is an interpretation of the general \SM{} applied to this class of problems. In this paper, we consider the primal (Network) \SM{} together with Dantzig's pivot rule, which selects the nonbasic variable with the most negative reduced cost. We refer to this variant of the (Network) \SM{} as the \emph{(Network) \SA{}}. 

One of the simplest and most basic algorithms for minimum-cost flow problems is the \SSPA{} which iteratively augments flow along paths of minimum cost in the residual network~\cite{BusackerGowen60,Iri60}. According to Ford and Fulkerson~\cite{FordFulkerson62}, the underlying theorem stating that such an augmentation step preserves optimality ``\emph{may properly be regarded as the central one concerning minimal cost flows}''. Zadeh~\cite{Zadeh73} presented a family of instances forcing the \SSPA{} and also the \NSA{} into exponentially many iterations. On the other hand, Tardos~\cite{Tardos85} proved that minimum-cost flows can be computed in strongly polynomial time, and Orlin~\cite{Orlin97} gave a polynomial variant of the \NSM{}. 



\reducespace
\paragraph{\emph{Main contribution.}}
We argue that the exponential worst-case running time of the (Network) \SA{} and the \SSPA{} is not purely a waste of time. While these algorithms sometimes take longer than necessary to reach their primary objective (namely to find an optimum solution to a particular linear program), they collect meaningful information on their detours and implicitly solve difficult problems. To make this statement more precise, we introduce a definition of `implicitly solving' that is as minimalistic as possible with regards to the extent in which we are permitted to use the algorithm's internal state. The following definition refers to the \emph{complete configuration} of a Turing machine, \ie, a binary representation of the machine's internal state, contents of its tape, and position of its head.

\begin{definition}\label{def:impl-solve}
An algorithm given by a Turing machine $T$ \emph{implicitly solves} a decision problem $\mathcal{P}$ if, for a given instance $I$ of $\mathcal{P}$, it is possible to compute in polynomial time an input $I'$ for $T$ and a bit~$b$ in the complete configuration of $T$, such that~$I$ is a yes-instance if and only if $b$ flips at some point during the  execution of $T$ for input~$I'$.
\end{definition}


An algorithm that implicitly solves a particular \noun{NP}-hard decision problem implicitly solves all problems in \noun{NP}. We call such algorithms \emph{\noun{NP}-mighty}. 

\begin{definition}\label{def:NP-mighty}
An algorithm is \emph{\noun{NP}-mighty} if it implicitly solves every decision problem in \noun{NP}.
\end{definition}

Note that every algorithm that \emph{explicitly} solves an \noun{NP}-hard decision problem, by definition, also implicitly solves this problem (assuming, w.l.o.g., that a single bit indicates if the Turing machine has reached an accepting state) and thus is \noun{NP}-mighty.

The above definitions turn out to be sufficient for our purposes. We remark, however, that slightly more general versions of Definition~\ref{def:impl-solve}, involving constantly many bits or broader/free access to the algorithm's output, seem reasonable as well. In this context, access to the exact number of iterations needed by the algorithm also seems reasonable as it may provide valuable information. In fact, our results below still hold if the number of iterations is all we may use of an algorithm's behavior. Most importantly, our definitions have been formulated with some care in an attempt to distinguish `clever' exponential-time algorithms from those that rather `waste time' on less meaningful operations. We discuss this critical point in some more detail.

Constructions of exponential time worst-case instances for algorithms usually rely on gadgets that somehow force an algorithm to count, \ie, to enumerate over exponentially many configurations. Such counting behavior by itself cannot be considered `clever', and, consequently, an algorithm should certainly exhibit more elaborate behavior to qualify as being \noun{NP}-mighty. As an example, consider the \emph{simple counting algorithm} (Turing machine) that counts from a given positive number down to zero, \ie, the Turing machine iteratively reduces the binary number on its tape by one until it reaches zero. To show that this algorithm is \emph{not} \noun{NP}-mighty, we need to assume that \noun{P}$\neq$\noun{NP}, as otherwise the polynomial-time transformation of inputs can already solve \noun{NP}-hard problems. Since, for sufficiently large inputs, every state of the simple counting algorithm is reached, and since every bit on its tape flips at some point, our definitions are meaningful in the following sense.

\begin{proposition}
Unless $\textrm{\noun{P}}=\textrm{\noun{NP}}$, the simple counting algorithm is not \noun{NP}-mighty while every algorithm that solves an \noun{NP}-hard problem is \noun{NP}-mighty.
\end{proposition}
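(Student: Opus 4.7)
The proposition splits into two directions. The forward direction---that any algorithm explicitly solving an NP-hard decision problem is NP-mighty---follows immediately from Definition~\ref{def:impl-solve}. Given any NP problem $\mathcal{P}$, I would polynomially reduce it to the NP-hard problem that the algorithm solves, and take $b$ to be the single bit signalling that the Turing machine has reached its accepting state. Then $b$ flips exactly when the reduced instance is accepted, hence exactly when the original $\mathcal{P}$-instance is a yes-instance.

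For the other direction, my plan is to argue by contrapositive: if the simple counting algorithm were NP-mighty, I will exhibit a polynomial-time decision procedure for an NP-hard problem, contradicting P$\neq$NP. Suppose that for some NP-hard $\mathcal{P}$ there is a polynomial-time map $I \mapsto (I',b)$, where $I'=N$ is a starting number of bit-length $n$ polynomial in $|I|$ and $b$ indexes a bit in the complete configuration. The key observation I will leverage is that the counter's execution is completely oblivious: it depends only on $N$ and simply steps through the values $N, N-1, \dots, 0$ via a fixed decrement routine. I will show that, given $(N,b)$, one can decide in time polynomial in $n$ whether $b$ flips during the execution; composing with the hypothesized reduction then decides $\mathcal{P}$ in polynomial time.

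The technical heart is a case analysis on where $b$ lies in the configuration encoding. If $b$ is a tape bit at position $k$, its value at any moment equals the $k$-th bit of the current counter value, so it flips iff both $0$ and $1$ appear at position $k$ along the sequence $N, N-1, \dots, 0$; since $0$ always occurs, this reduces to checking $2^k \le N$, done in polynomial time. If $b$ lies in the constant-sized state register or the $O(\log n)$-sized head-position register, then either $N$ is small enough to simulate the full run in polynomial time, or $N$ is large enough that the fixed decrement routine cycles through every state and revisits every position within its working region, forcing $b$ to flip. Each case is decidable in time polynomial in $n$.

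The hard part will be pinning down a concrete and reasonable encoding of the complete configuration and verifying that the case analysis is exhaustive over all admissible choices of $b$---including bit positions in regions of the tape that are never visited, for which the answer is trivially no. Once this is done, composing the resulting polynomial-time predictor with the supposed polynomial-time reduction contradicts P$\neq$NP, and the proposition follows.
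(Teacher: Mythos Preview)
Your proposal is correct and aligns with the paper's own justification, which is in fact not a formal proof at all but a single sentence preceding the proposition: ``Since, for sufficiently large inputs, every state of the simple counting algorithm is reached, and since every bit on its tape flips at some point, our definitions are meaningful in the following sense.'' Your case analysis (tape bit versus state bit versus head-position bit) simply fleshes out this observation into an explicit polynomial-time decision procedure for the question ``does bit $b$ flip on input $N$?'', which is exactly what is needed to derive the contradiction with $\textrm{P}\neq\textrm{NP}$.

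One minor caveat worth tightening in a full write-up: your claim that a tape bit at position $k$ ``at any moment equals the $k$-th bit of the current counter value'' is not literally true at intermediate steps of a single decrement, but this does not matter for the argument---what you need is only that bit $k$ takes both values $0$ and $1$ at \emph{some} moments during the run, and for that it suffices to look at the stable configurations between decrements. You already flag the encoding details as the place requiring care, and that is the right assessment; the paper itself does not descend to this level of detail.
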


Our main result explains the exponential worst-case running time of the following algorithms with their computational power.

\begin{theorem}\label{thm:NP-mighty}
The \SA{}, the \NSA{} (both with Dantzig's pivot rule), and the \SSPA{} are \noun{NP}-mighty.
\end{theorem}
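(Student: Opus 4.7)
\medskip
\noindent\textbf{Proof plan.}
The plan is to reduce from an \noun{NP}-hard decision problem, most naturally \textsc{Partition} or \textsc{Subset-Sum}, because the known exponential lower-bound constructions for these algorithms (in particular Zadeh's~\cite{Zadeh73}) already force the algorithm to traverse all $2^n$ configurations of $n$ binary switches. The key idea is to exploit this forced enumeration: if one can encode a \textsc{Subset-Sum} instance $a_1,\dots,a_n,S$ into a min-cost flow network so that the augmenting paths chosen by the \SSPA{} are in one-to-one correspondence with subsets $J\subseteq\{1,\dots,n\}$, and so that the cost (or the flow value) of the $J$-th augmenting path equals $\sum_{i\in J}a_i$, then a simple ``target gadget'' attached at value~$S$ will be activated \emph{if and only if} some subset sums to $S$. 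The bit $b$ in Definition~\ref{def:impl-solve} is then chosen as a flow-indicator bit on an arc of this target gadget, or equivalently a status bit that records whether $S$ was ever hit.

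The first and main step is to engineer, on top of Zadeh's gadget, an \emph{item gadget} for each $a_i$ whose two available residual paths (``include'' versus ``exclude'') have costs that differ by exactly $a_i$, so that after $k$ augmentations the accumulated cost shift along a canonical ``probe'' sub-path equals $\sum_{i\in J_k}a_i$ where $J_k\subseteq\{1,\dots,n\}$ is the binary expansion of $k$. A second \emph{selector gadget} compares this accumulated value to $S$ and forces saturation of a designated ``witness'' arc exactly in those iterations where equality holds. The delicate part is choosing arc costs and capacities so that (i) the \SSPA{} breaks ties in precisely the intended order, making the sequence of shortest augmenting paths deterministic and equal to $J_0,J_1,\dots,J_{2^n-1}$, and (ii) the auxiliary arcs used to realize the comparison never themselves become attractive out of order. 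This tie-breaking/uniqueness argument, which essentially requires an explicit cost perturbation that does not destroy the shortest-path structure, is the main obstacle.

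Given the construction for \SSPA, the statement for the \NSA{} follows by exploiting the classical correspondence between augmenting-path iterations of \SSPA{} and pivots of the \NSA{} with Dantzig's rule: one reformulates the reduction as a min-cost flow LP (with a source--sink feasibility arc of large negative cost, so that the reduced-cost-minimizing pivot rule selects exactly the entering arc dual to the current minimum-cost residual path), and checks that the pivot sequence mirrors the augmentation sequence above. The bit $b$ is then reinterpreted as a bit of the basis-indicator representation. Once the \NSA{} is shown to be \noun{NP}-mighty, the general \SA{} inherits the property for free: the constructed instance is a linear program, and running the general \SA{} with Dantzig's rule on its standard-form encoding produces the same sequence of pivots (up to a polynomial-time-decodable re-indexing of variables), so the same bit can be exhibited.

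Finally, one has to verify the minor but necessary bookkeeping: the construction size is polynomial in $n$ and $\log\max_i a_i$; the input $I'$ and the coordinate of the bit $b$ are computable in polynomial time from $I$; and the bit $b$ is initially $0$ and flips only when the witness event occurs, so that ``$b$ flips at some point'' coincides with ``$I$ is a yes-instance''. Combining these three reductions proves Theorem~\ref{thm:NP-mighty}; as a byproduct one also obtains the hardness results advertised in the abstract (whether a given variable ever enters the basis, and what the number of iterations is), since both can be read off the same witness bit.
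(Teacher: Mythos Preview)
Your high-level plan matches the paper's strategy: reduce from \textsc{Partition}, build a counting gadget whose $k$-th augmenting path has cost perturbed by $\sum_{i}(-1)^{k_{i-1}}a_i$, and attach a witness arc that is touched iff this perturbation vanishes. Two substantive gaps remain, however.

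\textbf{The selector gadget is the crux, and you have not described a mechanism that works.} The \SSPA{} only compares costs of competing $s$--$t$ paths; it has no way to ``compare the accumulated value to~$S$'' against a fixed constant. The paper's resolution is not a comparator but a \emph{mirror}: it runs two copies of the counting gadget in parallel, one perturbed by~$+\vec a$ and one by~$-\vec a$, so that in iteration~$j$ the two through-paths cost $j+\vec a_n^{[j]}$ and $j-\vec a_n^{[j]}$. A cross arc~$e$ linking the halfway points of the two gadgets then yields a hybrid path of cost exactly~$j$, which is strictly shortest iff $\vec a_n^{[j]}=0$. Without this (or an equivalent) idea, your selector step does not go through.

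\textbf{There is no ``classical correspondence'' that transfers the \SSPA{} construction to the \NSA{} with Dantzig's rule.} Consecutive shortest paths in the counting gadget differ in \emph{two} arcs, whereas a simplex pivot changes the spanning tree by \emph{one} arc; moreover the most-negative-reduced-cost arc need not lie on the next shortest $s$--$t$ path. The paper cannot and does not invoke such a correspondence. Instead it builds a genuinely different gadget $S_i^{\vec v,r}$: capacities are increased so that two extra ``intermediate'' pivots occur between what were consecutive \SSPA{} iterations, an initial unit of flow is injected along the $s_i$--$s_0$ and $t_0$--$t_i$ paths so that $(s_0,t_0)$ is always the unique leaving arc in the non-intermediate pivots, and an asymmetry parameter~$r$ breaks all remaining ties. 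Establishing the analogue of Lemma~\ref{lem:ssp_gadget} for this gadget (the paper's Lemma~\ref{lem:ns_gadget}) is the bulk of the work for the \NSA{}. Your plan to simply ``reformulate the reduction as a min-cost flow LP with a large negative-cost source--sink arc'' and assert that the pivot sequence mirrors the augmentation sequence would fail on the very first pivot.

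Once the \NSA{} construction is in place, your last step---reading off the general \SA{} result---is fine, and your bookkeeping remarks are correct.
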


We prove this theorem by showing that the algorithms implicitly solve the \noun{NP}-complete \noun{Partition} problem (cf.~\cite{GareyJohnson79}). To this end, we show how to turn a given instance of \noun{Partition} in polynomial time into a minimum-cost flow network with a distinguished arc~$e$, such that the \NSA{} (or the \SSPA{}) augments flow along arc~$e$ in one of its iterations if and only if the \noun{Partition} instance has a solution. Under the mild assumption that in an implementation of the \NSA{} or the \SSPA{} fixed bits are used to store the flow variables of arcs, this implies that these algorithms implicitly solve \noun{Partition} in terms of Definition~\ref{def:impl-solve}. 

A central part of our network construction is a recursively defined family of counting gadgets on which these minimum-cost flow algorithms take exponentially many iterations. These counting gadgets are, in some sense, simpler than Zadeh's 40 years old `bad networks'~\cite{Zadeh73} and thus interesting in their own right. By slightly perturbing the costs of the arcs according to the values of a given \noun{Partition} instance, we manage to force the considered minimum-cost flow algorithms into enumerating all possible solutions. In contrast to mindless counters, we show that the algorithms are self-aware in the sense that whether or not they encountered a valid \noun{Partition} solution is reflected in their internal state (in the sense of Definition~\ref{def:impl-solve}).

\reducespace
\paragraph{\emph{Further results.}}
We mention interesting consequences of our main results discussed above. Proofs of the following corollaries can be found in Appendix~\ref{sec:corollaries}. We first state complexity results that follow from our proof of Theorem~\ref{thm:NP-mighty}. 

\begin{corollary}\label{cor:ssp_iterations}
Determining the number of iterations needed by the \SA{}, the \NSA{}, and the \SSPA{} for a given input is \noun{NP}-hard.
\end{corollary}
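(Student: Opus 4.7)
The plan is to turn the reduction underlying Theorem~\ref{thm:NP-mighty} into a polynomial-time many-one (more precisely, Turing) reduction from \noun{Partition} to the problem of computing the number of iterations. Given a \noun{Partition} instance $I$, the proof of Theorem~\ref{thm:NP-mighty} constructs in polynomial time a min-cost flow network $N(I)$ on which the algorithm runs deterministically: the counting gadgets force it through a sequence of iterations whose combinatorial skeleton depends only on the \emph{size} of $I$ (e.g., the number $n$ of items and the bit-length of the values), while the specific \noun{Partition} values influence only whether the distinguished arc $e$ is ever augmented. In particular, the total number of iterations on $N(I)$ is a function that (i) is the same for all \noun{Partition} instances of the same size, up to (ii) a polynomially-bounded additive correction that appears if and only if $e$ is augmented, i.e. if and only if $I$ is a yes-instance.

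I would then make this correction explicit. The counting gadget, being recursively defined in the proof, admits a closed-form expression $C(n)$ for its number of pivoting/augmentation iterations. When the \noun{Partition} solution is detected, the augmentation of $e$ together with the cleanup iterations it triggers adds a fixed, polynomially-computable number $\delta(n) > 0$ of iterations before the algorithm terminates. Hence, writing $T(N(I))$ for the iteration count, one has
\[
T(N(I)) \;=\; \begin{cases} C(n) & \text{if $I$ is a no-instance},\\ C(n) + \delta(n) & \text{if $I$ is a yes-instance}. \end{cases}
\]
Both $C(n)$ and $C(n)+\delta(n)$ are polynomial-time computable as binary numbers (of polynomial bit-length even though exponentially large in magnitude). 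An oracle for the iteration count therefore decides \noun{Partition} by a single query followed by a polynomial-time comparison; if one prefers a cleaner argument, one can instead compare with a companion query on a \emph{twin} network $N^\circ(I)$ obtained by a trivial modification that disables arc $e$, in which case the oracle's answer differs from the answer on $N(I)$ by exactly $\delta(n)$ in the yes case and by $0$ in the no case.

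The main obstacle is to show that the augmentation of $e$ interacts cleanly with the counting gadget, so that the correction $\delta(n)$ is indeed the same regardless of which specific subset solves \noun{Partition}. This is tantamount to ensuring that once $e$ carries flow, either the algorithm terminates immediately or it executes an input-independent (or at worst structure-determined) number of further iterations; both alternatives are available through appropriate choice of the sink arrangement and costs in the construction of Theorem~\ref{thm:NP-mighty}. The argument is uniform across the three algorithms, since Theorem~\ref{thm:NP-mighty} is proved for each by the same reduction, and it in fact also yields the stronger statement that deciding whether the algorithm performs at least a given number of iterations is \noun{NP}-hard.
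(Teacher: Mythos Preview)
Your high-level plan is sound and matches the paper's approach for the \NSA{} and \SA{}: there the proof of Lemma~\ref{lem:ns_construction} already shows that the iteration count on $G_{\mathrm{ns}}^{\vec{a}}$ exceeds the baseline $4x_n$ if and only if the \noun{Partition} instance has a solution, so an iteration-count oracle decides \noun{Partition} via a single threshold comparison. Your twin-network idea would also work here.

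For the \SSPA{}, however, there is a genuine gap. In the network $G_{\mathrm{ssp}}^{\vec{a}}$ constructed for Theorem~\ref{thm:NP-mighty}, the number of iterations is \emph{exactly} $2^{n+1}$ in every case, solution or not. Inspect the proof of Lemma~\ref{lem:ssp_construction}: when $\vec{a}_n^{[j]}=0$, iterations $2j$ and $2j+1$ route flow through~$e$ and then back through~$e$ in reverse, but this pair of iterations \emph{replaces} the pair $P^+,P^-$ that would have occurred otherwise; both gadgets still advance one state in two iterations. So your claimed correction satisfies $\delta(n)=0$, and the twin network $N^\circ(I)$ with $e$ removed has the identical iteration count. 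The construction of Theorem~\ref{thm:NP-mighty} alone therefore does not yield the reduction for the \SSPA{}.

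The paper closes this gap with a small but essential modification: it replaces $e$ by two parallel arcs of capacity~$1/2$ with slightly perturbed costs. An augmentation that previously sent one unit along $e$ in a single step now needs two steps, so yes-instances acquire strictly more than $2^{n+1}$ iterations while no-instances stay at $2^{n+1}$. Note also that the paper phrases the conclusion as a threshold (``more than $2^{n+1}$'') rather than an exact offset $C(n)+\delta(n)$; this is important because a \noun{Partition} instance may have several solutions, and each one contributes extra iterations, so the additive correction is not a function of $n$ alone as your formula assumes.
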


\begin{corollary}
Deciding for a given linear program whether a given variable ever enters the basis during the execution of the \SA{} is \noun{NP}-hard.
\end{corollary}

Another interesting implication is for parametric flows and, more generally, parametric linear programming.

\begin{corollary}\label{cor:parametric}
Determining whether a parametric minimum-cost flow uses a given arc (\ie, assigns positive flow value for any parameter value) is \noun{NP}-hard. In particular, determining whether the solution to a parametric linear program uses a given variable is \noun{NP}-hard. Also, determining the number of different basic solutions over all parameter values is \noun{NP}-hard.
\end{corollary}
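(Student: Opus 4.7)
The plan is to lift the reduction used in the proof of Theorem~\ref{thm:NP-mighty} to a parametric setting. Recall that from a given instance of \noun{Partition} one constructs a minimum-cost flow network with source $s$, sink $t$, and a distinguished arc $e$ such that the \SSPA{} augments flow along $e$ in some iteration if and only if the \noun{Partition} instance is a yes-instance. I would take exactly this network and view the required $s$-$t$ flow value as a real parameter $\lambda$ ranging from $0$ to the maximum flow value $v^{*}$.

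For this parametric problem, the optimal minimum-cost $s$-$t$ flow is a well-known piecewise-linear function of $\lambda$ whose breakpoints correspond precisely to the iterations of the \SSPA{}: between consecutive breakpoints $\lambda_i,\lambda_{i+1}$, the optimal flow is obtained from the optimal flow at $\lambda_i$ by pushing $\lambda-\lambda_i$ additional units along the $i$-th minimum-cost augmenting path. Consequently, arc $e$ carries positive flow for some $\lambda\in[0,v^{*}]$ if and only if $e$ lies on one of these augmenting paths, which, by the construction of Theorem~\ref{thm:NP-mighty}, happens if and only if the underlying \noun{Partition} instance has a solution. This proves NP-hardness of the first claim, and since minimum-cost flow is a linear program in which $\lambda$ enters only on the right-hand side (at the flow-conservation constraint at $t$), the statement about parametric linear programming follows immediately by taking the flow variable on $e$ as the distinguished variable.

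For the last claim about the number of distinct basic solutions, I would again apply the same reduction, this time in combination with Corollary~\ref{cor:ssp_iterations}. Each piece of the piecewise-linear optimal flow corresponds to one spanning-tree basis of the network LP, and consecutive pieces have distinct bases, so the number of different basic solutions attained as $\lambda$ varies over $[0,v^{*}]$ equals the number of \SSPA{} iterations on the constructed network (up to an additive constant of at most one). NP-hardness of counting basic solutions of a parametric LP therefore follows from the NP-hardness of counting \SSPA{} iterations asserted in Corollary~\ref{cor:ssp_iterations}.

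The main obstacle I expect is to verify that the correspondence between \SSPA{} augmentations and breakpoints of the parametric optimum is truly one-to-one in the specific networks built for Theorem~\ref{thm:NP-mighty}. In the presence of degeneracy or ties in reduced costs, two consecutive augmentations could collapse into a single breakpoint, or one augmentation could split into two, distorting the count and possibly also the argument about arc $e$. One would need to confirm that the counting gadgets together with the \noun{Partition}-dependent cost perturbations underlying Theorem~\ref{thm:NP-mighty} are already generic enough, or can be perturbed further by an inverse-polynomially small amount, so as to rule out such coincidences without affecting the overall reduction.
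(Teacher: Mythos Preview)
Your proposal is correct and follows essentially the same approach as the paper: the paper's proof simply invokes the fact that the \SSPA{} computes a parametric minimum-cost flow together with Lemma~\ref{lem:ssp_construction} and Corollary~\ref{cor:ssp_iterations}, which is exactly the reduction you spell out. Regarding the obstacle you flag, the cost perturbations (the $\eps/5$ terms) built into $G_{\mathrm{ssp}}^{\vec a}$ already break ties, and the paper elsewhere (proof of Corollary~\ref{cor:projection}) makes explicit that a further tiny perturbation of these two arcs guarantees all successive shortest paths have distinct costs, so the one-to-one correspondence you need is indeed available.
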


We also obtain the following complexity result on $2$-dimensional projections of polyhedra.

\begin{corollary}\label{cor:projection}
Given a $d$-dimensional polytope~$P$ by a system of linear inequalities, determining the number of vertices of $P$'s projection onto a given $2$-dimensional subspace is \noun{NP}-hard.
\end{corollary}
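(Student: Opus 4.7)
The plan is to reduce from the \noun{NP}-hard problem of Corollary~\ref{cor:parametric}: counting the distinct basic solutions of a parametric LP $\min_{x\in P}(c+\lambda c')^T x$ as $\lambda$ ranges over $\BR$. The core geometric observation is that a vertex $v$ of $P \subseteq \BR^d$ is the unique optimum of this parametric LP for some $\lambda \in \BR$ if and only if its image $\pi(v) := (c^T v, c'^T v)$ is a vertex of the 2D projection $\pi(P) \subseteq \BR^2$ lying on its \emph{left boundary}, i.e., the portion of the boundary whose outward normal satisfies $w_1 \le 0$. Hence counting parametric optima equals counting left-boundary vertices of $\pi(P)$ (up to $\pm 1$ for the extremes at $\lambda = \pm\infty$).

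To promote this to a reduction for the total vertex count, I would suppress the contribution of the right boundary of $\pi(P)$ by attaching two ``spike'' points far to the east. Pick $M$ larger than $\max\{|c^Tx|,|c'^Tx| : x \in P\}$, a bound polynomial in the input bit-complexity. Lift $P$ to $\tilde P \subset \BR^{d+2}$ via two new variables $s, t$ by setting $\tilde P := \mathrm{conv}\bigl((P \times \{(0,0)\}) \cup \{(0,1,0), (0,0,1)\}\bigr)$; this polytope admits the explicit H-representation $A x + (s+t)\, b \le b$, $s,t \ge 0$, $s+t \le 1$, assuming $P = \{x : A x \le b\}$. Now project $\tilde P$ onto the 2D subspace spanned by $(c, M, M)$ and $(c', M, -M)$. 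A direct computation shows that the image equals $\mathrm{conv}\bigl(\pi(P) \cup \{(M, M), (M, -M)\}\bigr)$, namely the original polygon $\pi(P)$ with two spikes attached to the east.

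In this augmented polygon, the spikes strictly dominate any direction~$w$ with $w_1 > 0$ once $M$ is chosen as above, so the right boundary of $\pi(P)$ is entirely absorbed; the only vertices that survive are the original left-boundary vertices (with topmost and bottommost remaining extreme on the western-pointing portions of their normal cones) together with the two spikes themselves. Consequently the total vertex count of the 2D projection equals the number of parametric optima plus exactly $2$, yielding a polynomial-time reduction from Corollary~\ref{cor:parametric} to counting vertices of a 2D projection. The main obstacle is verifying the spike construction cleanly --- one must check, via a normal-cone argument using the choice of $M$, that no left-boundary vertex is destroyed and that no extraneous vertices are produced on the eastern side beyond the two spikes. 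Each of these checks is routine once one observes that the spikes lie outside the range in every direction~$w$ with $w_1 \ge 0$ and that western-pointing directions are unaffected by eastern additions.
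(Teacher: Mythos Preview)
Your approach is correct but takes a more elaborate route than the paper's. The paper works directly with the flow polytope of the specific network $G_{\mathrm{ssp}}^{\vec{a}}$ and projects onto the two coordinates (flow value, cost). The key observation there is that for this particular network the \emph{upper} envelope of the projected polygon is a single edge: the maximum-cost $s$--$t$-paths are the four paths through $s_n,s_{n-1},t_n$ or $s_n,t_{n-1},t_n$ in the two gadgets, all of identical cost $2^{n-1}-\tfrac12$, and their combined capacity equals the maximum flow value $2^{n+1}$. Hence no spike construction is needed --- after a tiny perturbation of the two $\tfrac15\varepsilon$ arcs to make all successive-shortest-path costs distinct, the total vertex count of the projection is exactly one more than the number of \SSPA{} iterations, which is \noun{NP}-hard to determine by Corollary~\ref{cor:ssp_iterations}.

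Your construction, by contrast, is a black-box reduction that would work for \emph{any} polytope $P$ and directions $c,c'$ for which counting the parametric optima of $\min_{x\in P}(c+\lambda c')^Tx$ is hard; you pay for this generality with the spike gadget and the attendant normal-cone bookkeeping. Two small points worth tightening: first, Corollary~\ref{cor:parametric} concerns parametric minimum-cost flow, where the parameter is a right-hand side (the flow value) rather than an objective coefficient as in your formulation --- the two are equivalent for counting breakpoints via the usual Lagrangian/lower-envelope correspondence, but you should say so explicitly. Second, the notation $(0,1,0),(0,0,1)$ for the lifted spike points should read $(0_d,1,0),(0_d,0,1)\in\BR^{d+2}$. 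With those fixes the argument is sound, and it has the merit of being instance-agnostic, whereas the paper's proof leans on the very convenient structure of $G_{\mathrm{ssp}}^{\vec{a}}$ to avoid your eastern-boundary suppression altogether.
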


We finally mention a result for a long-standing open problem in the area of network flows over time (see, \eg,~\cite{Skutella-Korte09} for an introduction to this area). The goal in \emph{earliest arrival flows} is to find an $s$-$t$-flow over time that simultaneously maximizes the amount of flow that has reached the sink node~$t$ at any point in time~\cite{Gale59}. It is known since the early 1970ies that the \SSPA{} can be used  to obtain such an earliest arrival flow~\cite{Minieka73,Wilkinson71}. All known encodings of earliest arrival flows, however, suffer from exponential worst-case size, and ever since it has been an open problem whether there is a polynomial encoding which can be found in polynomial time. The following corollary implies that, in a certain sense, earliest arrival flows are \noun{NP}-hard to obtain. 

\begin{corollary}\label{cor:earliest-arrival}
Determining the average arrival time of flow in an earliest arrival flow is \noun{NP}-hard. 
\end{corollary}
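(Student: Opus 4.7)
The plan is to reduce \noun{Partition} to the computation of the average arrival time, reusing the reduction from the proof of Theorem~\ref{thm:NP-mighty}. Given a \noun{Partition} instance $I$, that reduction produces a min-cost flow network with a distinguished arc $e$ such that the \SSPA{} augments flow along $e$ in some iteration if and only if $I$ is a yes-instance. I would reinterpret this network as a flow-over-time instance by treating arc costs as transit times and picking a time horizon $T$ large enough that the entire max-flow value can reach the sink along every augmenting path. By the classical result of Minieka~\cite{Minieka73} and Wilkinson~\cite{Wilkinson71}, the temporally repeated flow induced by the successive shortest augmenting paths of the \SSPA{} is then an earliest arrival flow for horizon $T$.

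Writing the augmenting paths as $P_1, \dots, P_k$ with transit times $c_1 \leq c_2 \leq \cdots \leq c_k$ and bottleneck capacities $u_1, \dots, u_k$, the total amount of flow arriving by time $T$ equals $\sum_i u_i(T - c_i)$ and the integrated arrival time equals $\tfrac{1}{2}\sum_i u_i(T^2 - c_i^2)$, so the average arrival time is a rational function of the pairs $(u_i, c_i)$ that decomposes additively over paths. If $e$ is used during the \SSPA{}, then the augmenting path $P_j$ that contains $e$ contributes an additional summand $\tfrac{1}{2}u_j(T^2 - c_j^2)$ to the numerator and $u_j(T - c_j)$ to the denominator. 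Slightly perturbing the cost of $e$ by a polynomially-representable amount ensures that this contribution shifts the value of the average arrival time by a nonzero, polynomially bounded quantity whose sign reveals whether $I$ is a yes- or a no-instance.

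The main obstacle is arranging the numerical parameters so that the two possible values of the average arrival time, one for yes- and one for no-instances of \noun{Partition}, both have polynomial bit length and can be separated by an easily computable threshold. This should follow from the polynomial bit length of all quantities in the reduction of Theorem~\ref{thm:NP-mighty}, together with a careful choice of the perturbation and horizon $T$. The fact that the \SSPA{} performs exponentially many iterations on the constructed instance poses no difficulty, since the closed-form expressions above depend only on the multiset of path-cost--capacity pairs produced by the \SSPA{}, whose high-level structure is controlled by the reduction itself and need not be enumerated explicitly.
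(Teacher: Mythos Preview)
Your approach is essentially the same as the paper's (which is itself presented only as a sketch): take the network $G_{\mathrm{ssp}}^{\vec a}$ from Section~\ref{sec:ssp}, interpret arc costs as transit times, invoke the Minieka--Wilkinson theorem to identify the temporally repeated SSPA flow with the earliest arrival flow, and then observe that a small perturbation of the cost of~$e$ changes the average arrival time if and only if the SSPA uses~$e$, which by Lemma~\ref{lem:ssp_construction} happens precisely for yes-instances of \noun{Partition}. The paper's sketch is equally informal about the final step of reading off the decision from the numerical value; in both versions this is most cleanly realized as a Turing reduction comparing two oracle calls (with and without the perturbation) rather than via a single universal threshold.
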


Note that an $s$-$t$-flow over time is an earliest arrival flow if and only if it minimizes the average arrival time of flow~\cite{Jarvis82}.

\reducespace
\paragraph{\emph{Outline.}}
After establishing some minimal notation in Section~\ref{sec:preliminaries}, we proceed to proving Theorem~\ref{thm:NP-mighty} for the \SSPA{} in Section~\ref{sec:ssp}. In Section~\ref{sec:ns}, we adapt the construction for the \NSA{}. Finally, Section~\ref{sec:conclusion} highlights interesting open problems for future research. 
All proofs are deferred to the appendix.

\section{Preliminaries}\label{sec:preliminaries}

In the following sections we show that the Successive Shortest
Path Algorithm and the \NSA{} implicitly solve the classical \noun{Partition} problem. An instance of \noun{Partition} is given by
a vector of positive numbers $\vec{a}=(a_{1},\dots,a_{n})\in\mathbb{Q}^{n}$
and the problem is to decide whether there is a subset $I\subseteq\{1,\dots,n\}$
with $\sum_{i\in I}a_{i}=\sum_{i\notin I}a_{i}$. This problem is
well-known to be \noun{NP}-complete (cf.~\cite{GareyJohnson79}). Throughout this paper
we consider an arbitrary fixed instance $\vec{a}$ of \noun{Partition}. Without loss of generality, we assume $A:=\sum_{i=1}^{n}a_{i}<1/12$ and that all values $a_i$, $i\in\{1,\dots,n\}$,
are multiples of $\varepsilon$ for some constant $\varepsilon>0$.

Let $\vec{v}=(v_{1},\dots,v_{n})\in\mathbb{Q}^{n}$ and $k\in\mathbb{N}$,
with $k_{j}\in\{0,1\}$, $j\in \mathbb{Z}_{\geq 0}$, being the $j$-th bit in the
binary representation of $k$, \ie, $k_{j}:=\left\lfloor k/2^{j}\right\rfloor \,\mathrm{mod}\,2$.
We define $\vec{v}_{i_{1},i_{2}}^{[k]}:=\sum_{j=i_{1}+1}^{i_{2}}(-1)^{k_{j-1}}v_{j}$,
$\vec{v}_{i}^{[k]}:=\vec{v}_{0,i}^{[k]}$, and $\vec{v}_{i,i}^{[k]} = 0$. 

The following characterization will be useful later.
\begin{proposition}
\label{prop:ak}The \noun{Partition} instance $\vec{a}$ admits a
solution if and only if there is a $k\in\{0,\dots,2^{n}-1\}$ for which
$\vec{a}_{n}^{[k]}=0$. 
\end{proposition}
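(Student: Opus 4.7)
The plan is to prove the proposition by a straightforward bijection between subsets of $\{1,\dots,n\}$ and integers $k \in \{0,\dots,2^{n}-1\}$ via binary representation, and to observe that $\vec{a}_n^{[k]} = 0$ translates directly into the Partition balance condition.

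First, I would unfold the definition: since $\vec{a}_n^{[k]} = \vec{a}_{0,n}^{[k]} = \sum_{j=1}^{n} (-1)^{k_{j-1}} a_j$, the quantity $\vec{a}_n^{[k]}$ depends only on the bits $k_0, k_1, \dots, k_{n-1}$. Hence, as $k$ ranges over $\{0,\dots,2^{n}-1\}$, the tuple $(k_0,\dots,k_{n-1})$ ranges over all of $\{0,1\}^n$, and every $\pm 1$ sign pattern on $(a_1,\dots,a_n)$ is realized by exactly one such $k$.

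For the forward direction, assume there is a Partition solution $I \subseteq \{1,\dots,n\}$ with $\sum_{i \in I} a_i = \sum_{i \notin I} a_i$. Define $k \in \{0,\dots,2^{n}-1\}$ by setting $k_{i-1} := 1$ if $i \in I$ and $k_{i-1} := 0$ otherwise, for $i = 1,\dots,n$. Then
\[
\vec{a}_n^{[k]} = \sum_{i \notin I} a_i - \sum_{i \in I} a_i = 0.
\]
For the converse, given $k \in \{0,\dots,2^{n}-1\}$ with $\vec{a}_n^{[k]}=0$, set $I := \{\, i \in \{1,\dots,n\} \mid k_{i-1} = 1 \,\}$. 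Then $\vec{a}_n^{[k]} = \sum_{i \notin I} a_i - \sum_{i \in I} a_i = 0$ yields $\sum_{i \in I} a_i = \sum_{i \notin I} a_i$, so $I$ is a valid Partition solution.

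Since both directions are essentially tautological once the bijection is set up, there is no real obstacle here; the only thing to be careful about is matching indices correctly (bit $k_{i-1}$ governs coordinate $a_i$) and confirming that restricting $k$ to $\{0,\dots,2^{n}-1\}$ suffices because $\vec{a}_n^{[k]}$ ignores all higher-order bits.
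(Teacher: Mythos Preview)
Your proof is correct and is exactly the direct verification the paper has in mind; in fact the paper does not supply a proof of this proposition at all, treating it as immediate from the definitions. There is nothing to add.
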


\section{\SSPA}\label{sec:ssp}

Consider a network~$N$ with a source node~$s$, a sink node~$t$, and non-negative arc costs. The \SSPA{} starts with the zero-flow and iteratively augments flow along a minimum-cost $s$-$t$-path in the current residual network, until a maximum $s$-$t$-flow has been found. Notice that the residual network is a sub-network of $N$'s bidirected network, where the cost of a backward arc is the negative of the cost of the corresponding forward arc.

\subsection{A Counting gadget for the \SSPA\label{sub:ssp_gadget}}

In this section we construct a family of networks for which the \SSPA{} takes an exponential number of iterations.
Assume we have a network $N_{i-1}$ with source $s_{i-1}$ and sink $t_{i-1}$ which requires $2^{i-1}$ iterations that each augment one unit of flow.
We can obtain a new network $N_{i}$ with only two additional nodes $s_i$, $t_i$ for which the \SSPA{} takes $2^i$ iterations.
To do this we add two arcs $(s_i,s_{i-1})$, $(t_{i-1},t_i)$ with capacity $2^{i-1}$ and cost $0$, and two arcs $(s_i,t_{i-1})$, $(s_{i-1},t_i)$ with capacity $2^{i-1}$ and very high cost.
The idea is that in the first $2^{i-1}$ iterations one unit of flow is routed along the arcs of cost $0$ and through $N_{i-1}$. 
After $2^{i-1}$ iterations both the arcs $(s_i,s_{i-1})$, $(t_{i-1},t_i)$ and the subnetwork $N_{i-1}$ are completely saturated and the \SSPA{} starts to use the expensive arcs $(s_i,t_{i-1})$, $(s_{i-1},t_i)$.
Each of the next $2^{i-1}$ iteration adds one unit of flow along the expensive arcs and removes one unit of flow from the subnetwork $N_{i-1}$.

We tune the cost of the expensive arcs to $2^{i-1}-\frac{1}{2}$ which turns out to be just expensive enough (cf.~Figure~\ref{fig:ssp_gadget}, with $v_i=0$).
This leads to a particularly nice progression of the costs of shortest paths, where the shortest path in iteration $j=0,1,\dots,2^i-1$ simply has cost $j$.
\begin{figure}[t]
\begin{centering}
\begin{tikzpicture}
  \node (sn) at (-0.6,0) [node,label=left:$s_0$] {}; 
  \node (tn) at (0.6,0) [node,label=right:$t_0$] {}; 
  \draw [arc] (sn) to node [auto,sloped,above] {$0$; $1$} (tn); 
  \node at (0,-1) {$N^{\vec{v}}_0$};

  \begin{scope}[xshift=6.5cm] 
   \node (s2) at (-4,0) [node,label=left:$s_i$] {};
   \node (t2) at (4,0) [node,label=right:$t_i$] {};
   \node (s1) at (0,2) [node,label=above:$s_{i-1}$] {};
   \node (t1) at (0,-2) [node,label=below:$t_{i-1}$] {};
   \node (s0) at (0.5,1) {};	 
   \node (s00) at (0.5,-1) {};
   \node (t0) at (-0.5,1) {};	 
   \node (t00) at (-0.5,-1) {};
   \draw [arc] (s2) to node [auto,sloped,above] {$\frac{1}{2}v_i$; $2^{i-1}$} (s1);
   \draw [arc] (s2) to node [auto,sloped,below] {\small$\frac{1}{2}(2^i-1 - v_i)$; $2^{i-1}$~~~} (t1);
   \draw [arc] (s1) to (s0);
   \draw [arc] (s1) to (t0); 
   \draw [arc] (s00) to (t1);
   \draw [arc] (t00) to (t1);
   \draw [arc] (s1) to node [auto,sloped,above] {\small~~~$\frac{1}{2}(2^i-1 - v_i)$; $2^{i-1}$} (t2);
   \draw [arc] (t1) to node [auto,sloped,below] {$\frac{1}{2}v_i$; $2^{i-1}$} (t2);
   \draw (0,0) ellipse (1cm and 3cm); \node at (0,0) {$N^{\vec{v}}_{i-1}$};
   \node at (-3,-2) {$N^{\vec{v}}_{i}$};
  \end{scope} 
\end{tikzpicture}
\par\end{centering}

\caption{\label{fig:ssp_gadget}
Recursive definition of the counting gadget $N_i^{\vec{v}}$ for the Successive Shortest Path Algorithm and $\vec{v}\in\{\vec{a},-\vec{a}\}$. Arcs are labeled by their cost and capacity in this order. The cost of the shortest $s_i$-$t_i$-path in iteration $j=0,\dots,2^i-1$ is $j+\vec{v}_i^{[j]}$.
}
\end{figure}
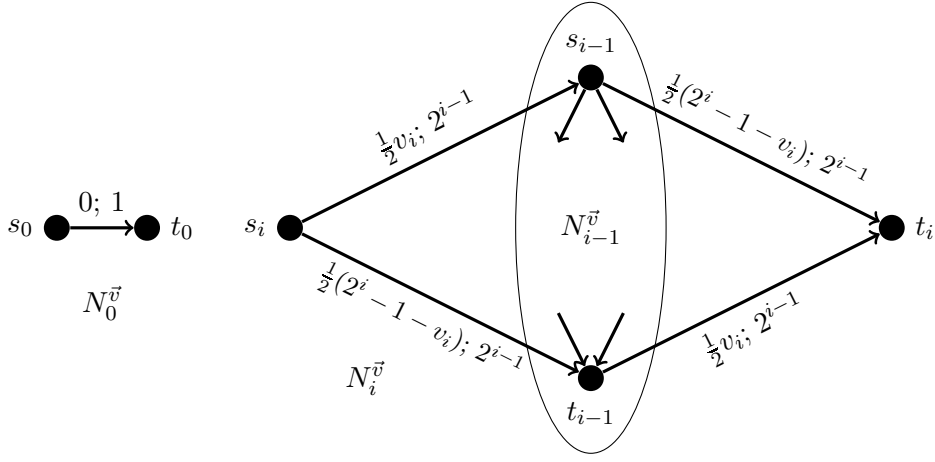

Our goal is to use this counting gadget to iterate over all candidate solutions for a \noun{Partition} instance $\vec{v}$ (we later use the gadget for $\vec{v}\in\{\vec{a},-\vec{a}\}$).
Motivated by Proposition~\ref{prop:ak}, we perturb the costs of the arcs in such a way that the shortest path in iteration $j$ has cost $j + \vec{v}_i^{[j]}$.
We achieve this by adding $\frac{1}{2}v_i$ to the cheap arcs $(s_i,s_{i-1})$, $(t_{i-1},t_i)$ and subtracting $\frac{1}{2}v_i$ from the expensive arcs $(s_i,t_{i-1})$, $(s_{i-1},t_i)$.
If the value of $v_i$ is small enough, this modification does not affect the overall behavior of the gadget.
The first $2^{i-1}$ iterations now have an additional cost of $v_i$ while the next $2^{i-1}$ iterations have an additional cost of $-v_i$, which leads to the desired cost when the modification is applied recursively.

Figure~\ref{fig:ssp_gadget} shows the recursive construction of our counting gadget $N_n^{\vec{v}}$ that encodes the \noun{Partition} instance~$\vec{v}$.
The following lemma formally establishes the crucial properties of the construction.

\begin{lemma}
\label{lem:ssp_gadget}For $\vec{v}\in\{\vec{a},-\vec{a}\}$ and $i=1,\dots,n$,
the \SSPA{} applied to network $N_{i}^{\vec{v}}$
with source $s_{i}$ and sink $t_{i}$ needs $2^{i}$ iterations to
find a maximum $s_{i}$-$t_{i}$-flow of minimum cost. In each iteration
$j=0,1,\dots,2^{i}-1$, the algorithm augments one unit of flow along
a path of cost $j+\vec{v}_{i}^{[j]}$ in the residual network.
\end{lemma}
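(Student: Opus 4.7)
I would prove the lemma by induction on $i$. The base case $i=0$ (or equivalently $i=1$ checked directly) is immediate: $N_0^{\vec{v}}$ consists of a single arc of cost $0$ and capacity $1$, requiring one augmentation of cost $0 = \vec{v}_0^{[0]}$. For the inductive step, I would split the $2^i$ iterations on $N_i^{\vec{v}}$ into two phases of length $2^{i-1}$, matching the intuition already given in the text.

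For the first phase ($j = 0, \dots, 2^{i-1}-1$), I would argue that the shortest $s_i$-$t_i$-path is $s_i \to s_{i-1} \leadsto t_{i-1} \to t_i$, where the middle segment is the $j$-th shortest $s_{i-1}$-$t_{i-1}$-path prescribed by the induction hypothesis (cost $j + \vec{v}_{i-1}^{[j]}$). Adding the two cheap outer arcs of total cost $v_i$ yields $j + v_i + \vec{v}_{i-1}^{[j]} = j + \vec{v}_i^{[j]}$, where the last equality uses that the $(i{-}1)$-st bit of $j$ is $0$ for $j<2^{i-1}$. To rule out paths traversing an expensive arc during Phase~1, I would note that any such path contains $(s_i,t_{i-1})$ or $(s_{i-1},t_i)$, so combining it with the cheapest complementary outer arc produces cost at least $\tfrac{1}{2}(2^i-1-v_i) + \tfrac{1}{2}v_i = 2^{i-1}-\tfrac{1}{2}$; the assumption $A < 1/12$ forces all perturbations to be below $\tfrac{1}{2}$ in magnitude, so this strictly exceeds $j + \vec{v}_i^{[j]} \le 2^{i-1}-1+A$. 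Paths that detour through the interior via residual arcs of $N_{i-1}^{\vec{v}}$ are even longer, because the flow inside $N_{i-1}^{\vec{v}}$ is min-cost after each of its augmentations.

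For the second phase ($j = 2^{i-1}+\ell$, $\ell = 0, \dots, 2^{i-1}-1$), Phase~1 has saturated both cheap outer arcs and computed a min-cost max flow in $N_{i-1}^{\vec{v}}$, so any augmenting path must enter $t_{i-1}$ via $(s_i,t_{i-1})$ and leave $s_{i-1}$ via $(s_{i-1},t_i)$. The heart of the argument is then the \emph{reversibility subclaim}: after SSP has produced a min-cost flow of value $k$ in $N_{i-1}^{\vec{v}}$, the cheapest $t_{i-1}$-$s_{i-1}$-path in the residual has cost exactly $-c_{k-1}$, where $c_{k-1} := (k-1) + \vec{v}_{i-1}^{[k-1]}$ is the cost of the $k$-th augmentation. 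This follows from the convexity of the min-cost-flow value function, whose piecewise-linear slopes $c_0 < c_1 < \cdots$ (strict by $c_{j+1}-c_j \ge 1-2A > 0$) are precisely the successive SSP path costs, so cancelling one unit of flow saves exactly $c_{k-1}$. Hence Phase~2 iteration $\ell$ cancels the $(2^{i-1}-\ell)$-th Phase~1 augmentation, and the full $s_i$-$t_i$-path has cost $(2^i-1-v_i) - c_{2^{i-1}-1-\ell}$. Applying the bit-complement identity $\vec{v}_{i-1}^{[2^{i-1}-1-\ell]} = -\vec{v}_{i-1}^{[\ell]}$ (complementing all $i{-}1$ low bits flips every sign), this simplifies to $2^{i-1}+\ell-v_i+\vec{v}_{i-1}^{[\ell]} = j + \vec{v}_i^{[j]}$, as required.

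The main obstacle will be making the Phase~2 argument airtight: proving the reversibility subclaim cleanly and excluding exotic residual paths that combine outer reverse arcs with interior detours. Both issues hinge critically on the assumption $A < 1/12$, which guarantees strict monotonicity of the inner path costs (so cancellation proceeds in \emph{last-in-first-out} order), bounds all perturbations comfortably below $\tfrac{1}{2}$ (so the structural inequalities separating the two phases cannot be accidentally violated), and ensures that the intermediate min-cost-flow optimality in $N_{i-1}^{\vec{v}}$ persists throughout the global SSP execution on $N_i^{\vec{v}}$.
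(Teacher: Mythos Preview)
Your proposal is correct. Phase~1 matches the paper's argument essentially verbatim. Phase~2 is where you diverge: the paper observes that, after the first $2^{i-1}$ iterations, shifting the residual costs of the four arcs in $N_{i-1}^{\vec v}\setminus N_{i-2}^{\vec v}$ by $\tfrac12(2^{i-1}-1)$ and swapping the roles of $s_{i-1}$ and $t_{i-1}$ recovers the \emph{original} network $N_{i-1}^{\vec v}$, so the induction hypothesis applies a second time and yields the Phase~2 costs directly (together with the structural invariant that $N_{i-1}^{\vec v}$ ends empty). You instead appeal to the convexity of the parametric min-cost-flow value function to conclude that the cheapest reverse path after $k$ augmentations has cost exactly $-c_{k-1}$, and then close with the bit-complement identity $\vec v_{i-1}^{[2^{i-1}-1-\ell]}=-\vec v_{i-1}^{[\ell]}$. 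Both routes are valid. The paper's isomorphism argument is more self-contained and automatically delivers the ``$N_{i-1}$ returns to zero flow'' invariant needed to iterate; your convexity-plus-identity argument is shorter, uses only standard SSP theory, and avoids tracking that extra invariant (the DAG structure of $N_{i-1}^{\vec v}$ guarantees the zero flow is the unique min-cost flow of value~$0$, so the invariant is recovered for free). One small point to tighten in a full write-up: your Phase~1 exclusion of expensive arcs should also explicitly bound the path $s_i\to t_{i-1}\leadsto s_{i-1}\to t_i$ using both expensive arcs together with a backward detour; this is easy (the backward cost is at least $-c_{j-1}\ge -(2^{i-1}-2)-A$), but your current sentence ``paths that detour through the interior \ldots\ are even longer'' asserts rather than proves it.
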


\subsection{The \SSPA{} implicitly solves \noun{Partition}\label{sub:ssp_construction}}

We use the counting gadget of the previous section to prove Theorem~\ref{thm:NP-mighty} for the \SSPA{}.
Let $G_{\mathrm{ssp}}^{\vec{a}}$ be the network consisting of the
two gadgets $N_{n}^{\vec{a}}$, $N_{n}^{-\vec{a}}$, connected to a new source node $s$ and a new sink $t$ (cf.~Figure~\ref{fig:ssp_construction}).
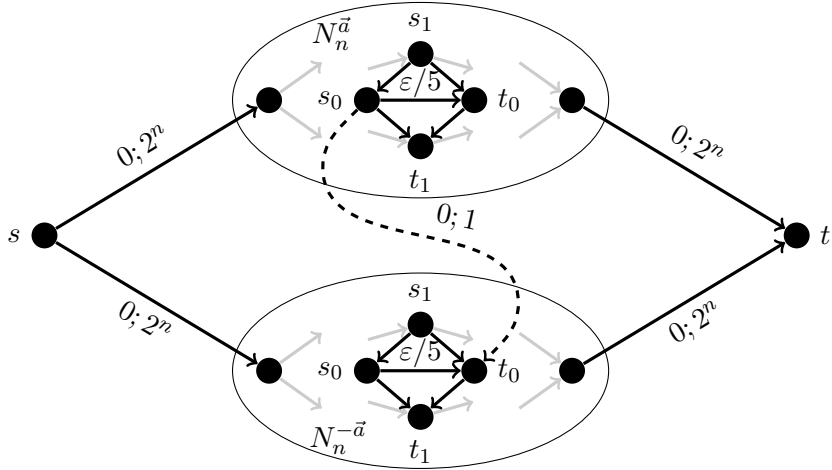
\begin{figure}[t]
\centering
\begin{tikzpicture}[node distance=1cm]
  \node (s) at (-5, 0) [node,label=left:$s$] {}; 
  \node (t) at ( 5, 0) [node,label=right:$t$] {};
  
  \node (Sp) at (0,1.8) {};
  \node (SpName) [above left = 0.4 and 0.65 of Sp] {$N^{\vec{a}}_n$};
  \draw (Sp) ellipse (2.5 and 1.3);
  
  \node (SpIn) [node, left = 1.7 of Sp] {};
  \draw [arc,->,color=black!20] (SpIn) to ++(.7,0.5);
  \draw [arc,->,color=black!20] (SpIn) to ++(.7,-0.5);
  \node (SpOut) [node, right = 1.7 of Sp] {};
  \draw [arc,<-,color=black!20] (SpOut) to ++(-.7,0.5);
  \draw [arc,<-,color=black!20] (SpOut) to ++(-.7,-0.5);
  \draw [arc] (s) to node [auto,sloped,above] {$0;2^n$} (SpIn); 
  \draw [arc] (SpOut) to node [auto,sloped,above] {$0;2^n$} (t);
  
  \node (Sp_s0) [node, left = .4 of Sp,label=left:$s_0$] {};
  \node (Sp_t0) [node, right = .4 of Sp,label=right:$t_0$] {};
  \node (Sp_s1) [node, above = .3 of Sp,label=above:$s_1$] {};
  \node (Sp_t1) [node, below = .3 of Sp,label=below:$t_1$] {};
  
  \draw [arc,color=black!20] (Sp_s1) to ++(.7,-0.2);
  \draw [arc,<-,color=black!20] (Sp_s1) to ++(-.7,-0.2);
  \draw [arc,color=black!20] (Sp_t1) to ++(.7,0.2);
  \draw [arc,<-,color=black!20] (Sp_t1) to ++(-.7,0.2);
  
  \draw [arc] (Sp_s1) to (Sp_s0); \draw [arc] (Sp_s1) to (Sp_t0);
  \draw [arc] (Sp_t0) to (Sp_t1); \draw [arc] (Sp_s0) to (Sp_t1);
  \draw [arc] (Sp_s0) to node [auto,above=-3pt] {$\eps/5$} (Sp_t0);
  
  \node (Sm) at (0,-1.8) {};
  \node (SmName) [below left = 0.4 and 0.45 of Sm] {$N^{-\vec{a}}_n$};
  \draw (Sm) ellipse (2.5 and 1.3);
  
  \node (SmIn) [node, left = 1.7 of Sm] {};
  \draw [arc,->,color=black!20] (SmIn) to ++(.7,0.5);
  \draw [arc,->,color=black!20] (SmIn) to ++(.7,-0.5);
  \node (SmOut) [node, right = 1.7 of Sm] {};
  \draw [arc,<-,color=black!20] (SmOut) to ++(-.7,0.5);
  \draw [arc,<-,color=black!20] (SmOut) to ++(-.7,-0.5);
  \draw [arc] (s) to node [auto,sloped,below] {$0;2^n$} (SmIn); 
  \draw [arc] (SmOut) to node [auto,sloped,below] {$0;2^n$} (t);
  
  \node (Sm_s0) [node, left = .4 of Sm,label=left:$s_0$] {};
  \node (Sm_t0) [node, right = .4 of Sm,label=right:$t_0$] {};
  \node (Sm_s1) [node, above = .3 of Sm,label=above:$s_1$] {};
  \node (Sm_t1) [node, below = .3 of Sm,label=below:$t_1$] {};
  
  \draw [arc,color=black!20] (Sm_s1) to ++(.7,-0.2);
  \draw [arc,<-,color=black!20] (Sm_s1) to ++(-.7,-0.2);
  \draw [arc,color=black!20] (Sm_t1) to ++(.7,0.2);
  \draw [arc,<-,color=black!20] (Sm_t1) to ++(-.7,0.2);
  
  \draw [arc] (Sm_s1) to (Sm_s0); \draw [arc] (Sm_s1) to (Sm_t0);
  \draw [arc] (Sm_t0) to (Sm_t1); \draw [arc] (Sm_s0) to (Sm_t1);
  \draw [arc] (Sm_s0) to node [auto,above=-3pt] {$\eps/5$} (Sm_t0);
  
  \draw [arc,dashed] (Sp_s0) to [out=-135,in=90] (-1.3,.9) to [out=-90,in=90] node [auto,sloped] {$0;1$} (1.3,-.9) to [out=-90,in=45] (Sm_t0);
  
\end{tikzpicture}
  \caption{\label{fig:ssp_construction}
  Illustration of network $G_{\mathrm{ssp}}^{\vec{a}}$. The subnetworks $N_n^{\vec{a}}$ and $N_n^{-\vec{a}}$ are advanced independently by the \SSPA{} without using arc $e$, unless the \noun{Partition} instance $\vec{a}$ has a solution. 
}
\end{figure}
For both of the gadgets, we add the arcs $(s,s_n)$ and $(t_n,t)$ with capacity $2^n$ and cost $0$.
We introduce one additional arc $e$ (dashed in the figure) of capacity $1$ and cost $0$ from node~$s_0$ of gadget $N_{n}^{\vec{a}}$ to node $t_0$ of gadget $N_{n}^{-\vec{a}}$.
Finally, we increase the costs of the arcs $(s_0,t_0)$ in both gadgets from $0$ to $\frac15 \varepsilon$.
Recall that $\varepsilon>0$ is related to $\vec{a}$ by the fact that all $a_{i}$'s are multiples of $\varepsilon$, \ie, a cost smaller than $\varepsilon$ is insignificant compared to all other costs.

\begin{lemma}\label{lem:ssp_construction}
 The \SSPA{} on network $G_{\mathrm{ssp}}^{\vec{a}}$ augments flow along arc $e$ if and only if the \noun{Partition} instance $\vec{a}$ has a solution.
\end{lemma}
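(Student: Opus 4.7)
The plan is to analyze, iteration by iteration, which shortest augmenting path the \SSPA{} selects on $G_{\mathrm{ssp}}^{\vec{a}}$. At any stage, the three candidate families of augmenting $s$-$t$-paths are (a) paths that stay inside $N_n^{\vec{a}}$, (b) paths that stay inside $N_n^{-\vec{a}}$, and (c) crossover paths that use arc $e$ from $s_0$ of the first gadget to $t_0$ of the second. The first two families are controlled by Lemma~\ref{lem:ssp_gadget}, so the lemma will reduce to comparing the crossover cost against the cheaper gadget-local cost at every iteration.

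First I would establish, from Lemma~\ref{lem:ssp_gadget} and the left-right symmetry of the counting gadget, that after $j$ iterations in $N_n^{\vec{v}}$ both the shortest $s_n$-$s_0$ path and the shortest $t_0$-$t_n$ path in the residual have cost $\tfrac{1}{2}(j+\vec{v}_n^{[j]})$ when $j$ is even, and $\tfrac{1}{2}(j+\vec{v}_n^{[j]})-\eps/5$ when $j$ is odd. The $-\eps/5$ appears because in the odd case the net flow on $(s_0,t_0)$ equals $1$, so the backward arc of cost $-\eps/5$ is available in the residual. These two halves, glued via the arc $(s_0,t_0)$ used forward or backward according to $j_0$, reconstruct the gadget-internal iteration cost $j+\vec{v}_n^{[j]}+(-1)^{j_0}\eps/5$, and their sum across the two gadgets gives the crossover cost in state $(j_1,j_2)$, namely $\tfrac{1}{2}(j_1+j_2)+\tfrac{1}{2}(\vec{a}_n^{[j_1]}-\vec{a}_n^{[j_2]})$ plus a parity correction in $\{0,-\eps/5,-2\eps/5\}$.

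Using $|\vec{a}_n^{[k]}|\le A<1/12$ together with the fact that consecutive iteration indices differ by exactly $1$, a short case check on the three candidate costs shows that, as long as $e$ has not been used yet, the \SSPA{} strictly alternates between $N_n^{\vec{a}}$ and $N_n^{-\vec{a}}$, so that $j_1=j_2$ holds after every even number of total iterations. At such a synchronized state $j_1=j_2=j$, the crossover cost specializes to $j-[j\text{ odd}]\cdot 2\eps/5$ while the two gadget-local costs become $j\pm\vec{a}_n^{[j]}+(-1)^{j_0}\eps/5$; subtracting yields that the crossover is strictly cheaper than both local options exactly when $|\vec{a}_n^{[j]}|<\eps/5$, which, since $\vec{a}_n^{[j]}$ is a multiple of $\eps$, happens iff $\vec{a}_n^{[j]}=0$. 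A similar but easier calculation at an asynchronous state $(j_1,j_2)=(j,j\pm 1)$ shows the crossover cost exceeds the cheaper local cost by at least $\tfrac{1}{2}-A-O(\eps)>0$, so $e$ is never used outside synchronized states. Combined with Proposition~\ref{prop:ak}, this proves the lemma.

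The main obstacle is the symmetric half-path cost claim in the second paragraph. I would establish it by induction on $i$, showing that every shortest path in the residual of $N_i^{\vec{v}}$ between any two of the four ``portal'' nodes $s_i,t_i,s_0,t_0$ coincides with one of the canonical paths dictated by the recursion of Figure~\ref{fig:ssp_gadget}, rather than with some unexpected residual alternative. Once this canonical structure is locked in across all levels of the recursion, the remaining comparisons are purely arithmetic and depend only on the separation of scales $\eps\ll A<1/12\ll 1$.
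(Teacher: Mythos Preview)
Your approach is essentially the same as the paper's: compare the three candidate augmenting paths (through $N_n^{\vec a}$, through $N_n^{-\vec a}$, and the crossover via $e$), invoke Lemma~\ref{lem:ssp_gadget} for the two gadget-internal costs, and use the left--right cost symmetry of the recursive gadget to evaluate the crossover cost. The paper organizes the argument slightly differently --- it first removes $e$ and uses only the crude bound $j\pm A$ to obtain synchronization, then reinserts $e$ and compares at the synchronized state --- whereas you keep $e$ in from the start and treat the asynchronous states explicitly; both routes lead to the same arithmetic comparison $|\vec a_n^{[j]}|$ versus $\eps/5$. Your explicit half-path induction is more than the paper actually carries out (the paper simply asserts that $P$ follows $P^+$ up to $s_0$ and $P^-$ from $t_0$ with cost exactly $j$, relying tacitly on the symmetry), so if anything you are filling in a step the paper leaves implicit.
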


We assume that a single bit of complete configuration of the Turing machine corresponding to the \SSPA{} can be used to distinguish whether arc~$e$ carries a flow of $0$ or a flow of $1$ during the execution of the algorithm and that the identity of this bit can be determined in polynomial time. Under this natural assumption, we get the following result, which implies Theorem~\ref{thm:NP-mighty} for the \SSPA{}.

\begin{corollary}\label{cor:ssp_solves_partition}
The \SSPA{} solves \noun{Partition} implicitly.
\end{corollary}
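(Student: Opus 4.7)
The plan is to apply Lemma~\ref{lem:ssp_construction} directly to the definition of implicit solving. Given a \noun{Partition} instance $\vec{a}$, the reduction constructs the network $G_{\mathrm{ssp}}^{\vec{a}}$ of Figure~\ref{fig:ssp_construction} and feeds it (in whatever standard encoding the \SSPA{}'s Turing machine accepts) to that Turing machine. The distinguished bit $b$ is the bit of the complete configuration that records whether the flow variable on the arc $e$ currently stores $0$ or $1$; by the natural assumption stated just before the corollary, such a bit exists and its address can be located in polynomial time from $G_{\mathrm{ssp}}^{\vec{a}}$.

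First I would verify that the reduction is polynomial-time. The network $G_{\mathrm{ssp}}^{\vec{a}}$ has $O(n)$ nodes and $O(n)$ arcs; the arc costs, being of the form $\tfrac{1}{2}v_i$, $\tfrac{1}{2}(2^i-1-v_i)$, or $\tfrac{1}{5}\eps$, and the capacities $2^i$, all admit polynomial-length encodings in the bit-length of the input $\vec{a}$. Hence both $G_{\mathrm{ssp}}^{\vec{a}}$ and the address of $b$ can be produced in polynomial time.

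Next I would argue that $b$ flips during the execution of the \SSPA{} on input $G_{\mathrm{ssp}}^{\vec{a}}$ if and only if $\vec{a}$ is a yes-instance of \noun{Partition}. The algorithm starts from the zero flow, so $e$ initially carries flow $0$ and $b$ is in its ``zero'' state. By Lemma~\ref{lem:ssp_construction}, the \SSPA{} augments along $e$ in some iteration if and only if $\vec{a}$ admits a partition. In the negative case $e$ carries flow $0$ throughout, so $b$ never changes value. In the positive case, the first iteration that augments through $e$ sets the flow on $e$ to $1$ and thereby flips $b$; even if a later iteration were to reverse this augmentation along the residual arc of $e$, the initial flip has already certified the ``yes'' answer, which is all Definition~\ref{def:impl-solve} requires.

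Combining these two observations matches the template of Definition~\ref{def:impl-solve}, proving the corollary, and hence the \SSPA{} part of Theorem~\ref{thm:NP-mighty} via Definition~\ref{def:NP-mighty} and the \noun{NP}-completeness of \noun{Partition}. The only delicate point is the bookkeeping around what counts as a flip of $b$ (since the flow value on $e$ could conceivably change more than once during execution); all of the genuine combinatorial content has already been discharged in Lemma~\ref{lem:ssp_construction}, so what remains is essentially a mechanical wrapping of that lemma inside the formalism of implicit solving.
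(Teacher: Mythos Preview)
Your proposal is correct and matches the paper's approach. The paper does not give a separate proof of this corollary; it is meant to follow immediately from Lemma~\ref{lem:ssp_construction} together with the ``natural assumption'' stated in the paragraph preceding the corollary, and your write-up is simply a careful unpacking of exactly that reasoning (with the added, but routine, verification that $G_{\mathrm{ssp}}^{\vec{a}}$ can be built in polynomial time).
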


\section{\SA{} and \NSA{}}\label{sec:ns}

In this section we adapt our construction for the \SA{} and, in particular, for its interpretation for the minimum-cost flow problem, the \NSA{}. In this specialized version of the \SA, a basic feasible solution is specified by a spanning tree~$T$ such that the flow value on each arc of the network not contained in~$T$ is either zero or equal to its capacity. We refer to this tree simply as the basis or the spanning tree. The reduced cost of a residual non-tree arc~$e$ equals the cost of sending one unit of flow in the direction of~$e$ around the unique cycle obtained by adding~$e$ to~$T$. For a pair of nodes, the unique path connecting these nodes in the spanning tree~$T$ is referred to as the \emph{tree-path} between the two nodes. Note that while we setup the initial basis and flow manually in the constructions of the following sections, determining the initial feasible flow algorithmically via the algorithm of Edmonds and Karp, ignoring arc costs, yields the same result.

Our construction ensures that all intermediate solutions of the \NSA{} are non-degenerate. Moreover, in every iteration there is a unique non-tree arc of minimum reduced cost which is used as a pivot element.

\subsection{A Counting gadget for the \NSA{}\label{sub:CountingNS}}

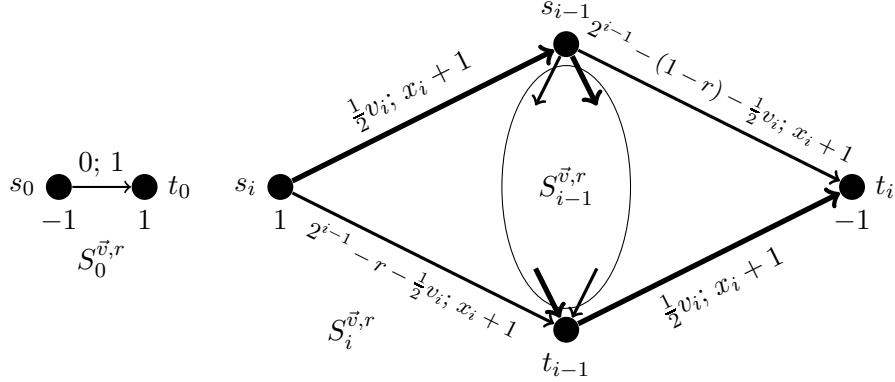
\begin{figure}[t]
\begin{centering}
\begin{tikzpicture}[scale=.95]
  \node (sn) at (-.6,0) [node,label=left:$s_0$,label=below:$-1$] {}; 
  \node (tn) at (.6,0) [node,label=right:$t_0$,label=below:$1$] {}; 
  \draw [arc,thick] (sn) to node [auto,sloped,above] {$0$; $1$} (tn); 
  \node at (0,-1) {$S^{\vec{v},r}_0$};

  \begin{scope}[xshift=6.5cm] \node (s2) at (-4.,0) [node,label=left:$s_i$,label=below:$1$] {};
   \node (t2) at (4.,0) [node,label=right:$t_i$,label=below:$-1$] {};
   \node (s1) at (0,2) [node,label=above:$s_{i-1}$] {};
   \node (t1) at (0,-2) [node,label=below:$t_{i-1}$] {};
   \node (s0) at (0.5,1) {};	 
   \node (s00) at (0.5,-1) {};
   \node (t0) at (-0.5,1) {};	 
   \node (t00) at (-0.5,-1) {};
   \draw [arc,line width=2] (s2) to node [auto,sloped,above] {$\frac{1}{2} v_i$; $x_i+1$} (s1);
   \draw [arc] (s2) to node [auto,sloped,below] {\smaller $2^{i-1} - r - \frac{1}{2}v_i$; $x_i+1$} (t1);
   \draw [arc,line width=2] (s1) to (s0);
   \draw [arc] (s1) to (t0); 
   \draw [arc] (s00) to (t1);
   \draw [arc,line width=2] (t00) to (t1);
   \draw [arc] (s1) to node [auto,sloped,above] {\smaller $2^{i-1} - (1-r) - \frac{1}{2}v_i$; $x_i+1$} (t2);
   \draw [arc,line width=2] (t1) to node [auto,sloped,below] {$\frac{1}{2}v_i$; $x_i+1$} (t2);
   \draw (0,0) ellipse (0.9cm and 1.7cm); 
   \node at (0,0) {$S^{\vec{v},r}_{i-1}$};
   \node at (-3,-2) {$S^{\vec{v},r}_{i}$};
  \end{scope} 
\end{tikzpicture}
\par\end{centering}

\caption{\label{fig:ns_gadget}
Recursive definition of the counting gadget $S_i^{\vec{v},r}$ for the \NSA{}, $\vec{v}\in\{\vec{a},-\vec{a}\}$, and a parameter $r\in (2A,1-2A)$, $r\neq 1/2$. The capacities of the arcs of $S_i^{\vec{a},r} \setminus S_{i-1}^{\vec{a},r}$ are $x_i+1=3\cdot 2^{i-1}$. If we guarantee that there always exists a tree-path from $t_i$ to $s_i$ with sufficiently negative cost outside of the gadget, the cost of iteration $3k$, $k=0,\dots,2^i-1$, within the gadget is $k+\vec{v}_i^{[k]}$. Bold arcs are in the initial basis and carry a flow of at least $1$ throughout the execution of the algorithm.
}
\end{figure}

\begin{figure}[t]
\centering
\resizebox{\textwidth}{!}{%
\begin{tikzpicture}
  
  \makeatletter
  \tikzset{%
    prefix node name/.code={%
      \tikzset{%
        name/.code={\edef\tikz@fig@name{#1 ##1}}
      }%
    }%
  }
  \makeatother
  
  \tikzstyle{every node}=[font=\small]
  
  \newcommand\drawnodes
  {
  \node (c) at (0,0) {};
  \node (s0) [node, left  = .3 of c, label={[shift={(-.2,-.1)}]$s_0$}] {};  
  \node (t0) [node, right = .3 of c, label={[shift={(.15,-.1)}]$t_0$}] {};
  \node (s1) [node, above = 1 of c, label={[shift={(.35,-.3)}]$s_1$}] {};   
  \node (t1) [node, below = 1 of c, label={[shift={(.35,-.5)}]$t_1$}] {};
  \node (s2) [node, left  = 1.2 of c, label={[shift={(-.2,-.1)}]$s_2$}] {}; 
  \node (t2) [node, right = 1.2 of c, label={[shift={(.15,-.1)}]$t_2$}] {};
  }  
  \begin{scope}[shift={(0,0)}]
    \drawnodes
    \draw [arc,ultra thick] (s2) to (s1); 
    \draw [arc,ultra thick] (t1) to (t2); 
    \draw [arc,ultra thick] (s1) to (s0); 
    \draw [arc,ultra thick] (t0) to (t1);
    \draw [arc,thick] (s2) to (t1); 
    \draw [arc,thick] (s1) to (t2); 
    \draw [arc,thick] (s1) to (t0); 
    \draw [arc,thick] (s0) to (t1); 
    \draw [arc,ultra thick,dashed,red] (s0) to (t0);
  \end{scope}
  
  \begin{scope}[shift={(4,0)}]
    \drawnodes
    \draw [arc,ultra thick] (s2) to (s1); 
    \draw [arc,ultra thick] (t1) to (t2); 
    \draw [arc,ultra thick] (t0) to (t1);
    \draw [arc,ultra thick,dashed] (s1) to (s0); 
    \draw [arc,thick] (s2) to (t1); 
    \draw [arc,thick] (s1) to (t2); 
    \draw [arc,thick] (s1) to (t0); 
    \draw [arc,thick] (t0) to (s0);
    \draw [arc,ultra thick,red] (s0) to (t1); 
  \end{scope}
  
  \begin{scope}[shift={(8,0)}]
    \drawnodes
    \draw [arc,ultra thick] (s2) to (s1); 
    \draw [arc,ultra thick] (t1) to (t2); 
    \draw [arc,ultra thick] (s0) to (t1); 
    \draw [arc,ultra thick,dashed] (t0) to (t1);
    \draw [arc,thick] (s2) to (t1); 
    \draw [arc,thick] (s1) to (t2); 
    \draw [arc,thick] (t0) to (s0);
    \draw [arc,thick] (s0) to (s1); 
    \draw [arc,ultra thick,red] (s1) to (t0); 
  \end{scope}
  
  \begin{scope}[shift={(12,0)}]
    \drawnodes
    \draw [arc,ultra thick] (s2) to (s1); 
    \draw [arc,ultra thick] (t1) to (t2); 
    \draw [arc,ultra thick] (s0) to (t1); 
    \draw [arc,ultra thick] (s1) to (t0); 
    \draw [arc,thick] (t1) to (t0);
    \draw [arc,thick] (s2) to (t1); 
    \draw [arc,thick] (s1) to (t2); 
    \draw [arc,thick] (s0) to (s1); 
    \draw [arc,ultra thick,dashed,red] (t0) to (s0);
  \end{scope}
  
  \begin{scope}[shift={(0,-3.5)}]
    \drawnodes
    \draw [arc,ultra thick] (t1) to (t2); 
    \draw [arc,ultra thick] (t1) to (s0); 
    \draw [arc,ultra thick] (t0) to (s1); 
    \draw [arc,ultra thick,dashed] (s2) to (s1); 
    \draw [arc,thick] (t1) to (t0);
    \draw [arc,thick] (s2) to (t1); 
    \draw [arc,thick] (s0) to (s1); 
    \draw [arc,thick] (s0) to (t0);
    \draw [arc,ultra thick,red] (s1) to (t2); 
  \end{scope}
  
  \begin{scope}[shift={(4,-3.5)}]
    \drawnodes
    \draw [arc,ultra thick] (t1) to (s0); 
    \draw [arc,ultra thick] (t0) to (s1); 
    \draw [arc,ultra thick] (s1) to (t2); 
    \draw [arc,ultra thick,dashed] (t1) to (t2); 
    \draw [arc,thick] (s1) to (s2); 
    \draw [arc,thick] (t1) to (t0);
    \draw [arc,thick] (s0) to (s1); 
    \draw [arc,thick] (s0) to (t0);
    \draw [arc,ultra thick,red] (s2) to (t1); 
  \end{scope}
  
  \begin{scope}[shift={(8,-3.5)}]
    \drawnodes
    \draw [arc,ultra thick] (t1) to (s0); 
    \draw [arc,ultra thick] (t0) to (s1); 
    \draw [arc,ultra thick] (s1) to (t2); 
    \draw [arc,ultra thick] (s2) to (t1); 
    \draw [arc,thick] (t2) to (t1); 
    \draw [arc,thick] (s1) to (s2); 
    \draw [arc,thick] (t1) to (t0);
    \draw [arc,thick] (s0) to (s1); 
    \draw [arc,ultra thick,dashed,red] (s0) to (t0);
  \end{scope}
  
  \begin{scope}[shift={(12,-3.5)}]
    \drawnodes
    \draw [arc,ultra thick] (t1) to (s0); 
    \draw [arc,ultra thick] (s1) to (t2); 
    \draw [arc,ultra thick] (s2) to (t1); 
    \draw [arc,ultra thick,red] (t1) to (t0);
    \draw [arc,thick] (t2) to (t1); 
    \draw [arc,thick] (s1) to (s2); 
    \draw [arc,thick] (s0) to (s1); 
    \draw [arc,thick] (t0) to (s0);
    \draw [arc,ultra thick,dashed] (t0) to (s1); 
  \end{scope}
  
  \begin{scope}[shift={(0,-7)}]
    \drawnodes
    \draw [arc,ultra thick] (s1) to (t2); 
    \draw [arc,ultra thick] (s2) to (t1); 
    \draw [arc,ultra thick] (t1) to (t0);
    \draw [arc,ultra thick,dashed] (t1) to (s0); 
    \draw [arc,thick] (t2) to (t1); 
    \draw [arc,thick] (s1) to (s2); 
    \draw [arc,thick] (t0) to (s0);
    \draw [arc,thick] (s1) to (t0); 
    \draw [arc,ultra thick,red] (s0) to (s1); 
  \end{scope}
  
  \begin{scope}[shift={(4,-7)}]
    \drawnodes
    \draw [arc,ultra thick] (s1) to (t2); 
    \draw [arc,ultra thick] (s2) to (t1); 
    \draw [arc,ultra thick] (t1) to (t0);
    \draw [arc,ultra thick] (s0) to (s1); 
    \draw [arc,thick] (s0) to (t1); 
    \draw [arc,thick] (t2) to (t1); 
    \draw [arc,thick] (s1) to (s2); 
    \draw [arc,thick] (s1) to (t0); 
    \draw [arc,ultra thick,red,dashed] (t0) to (s0);
  \end{scope}
  
  \begin{scope}[shift={(8,-7)}]
    \drawnodes
    \draw [arc,ultra thick] (t2) to (s1); 
    \draw [arc,ultra thick] (t1) to (s2); 
    \draw [arc,ultra thick] (t0) to (t1);
    \draw [arc,ultra thick] (s1) to (s0); 
    \draw [arc,thick] (s0) to (t1); 
    \draw [arc,thick] (t2) to (t1); 
    \draw [arc,thick] (s1) to (s2); 
    \draw [arc,thick] (s1) to (t0); 
    \draw [arc,thick] (s0) to (t0);
  \end{scope}
\end{tikzpicture}
}
\caption{\label{fig:ssp_example}
Illustration of the iterations performed by the \NSA{} on the counting gadget $S_2^{\vec{a},r}$ for $r<1/2$. The external tree-path from $t_2$ to $s_2$ is not shown. Bold arcs are in the basis before each iteration, the red arc enters the basis and the dashed arc exits the basis. Arcs are oriented in the direction in which they are used next. Note that after $2x_2 = 3 \cdot 2^2 - 2 = 10$ iterations the configuration is the same as in the beginning if we switch the roles of $s_2$ and $t_2$.
}
\end{figure}

We design a counting gadget for the \NSA (cf.~Figure~\ref{fig:ns_gadget}), similar to the gadget $N_i^{\vec{v}}$ of Section~\ref{sub:ssp_gadget} for the \SSPA{}.
Since the \NSA{} augments flow along cycles obtained by adding one arc to the current spanning tree, we assume that the tree always contains an external tree-path from the sink of the gadget to its source with a very low (negative) cost. This assumption will be justified below in Section~\ref{sub:ns_construction} when we embed the counting gadget into a larger network.

The main challenge when adapting the gadget $N_i^{\vec{v}}$ is that the spanning trees in consecutive iterations of the \NSA{} differ in one arc only, since in each iteration a single arc may enter the basis. However, successive shortest paths in $N_i^{\vec{v}}$ differ by exactly two tree-arcs between consecutive iterations.
We obtain a new gadget $S_i^{\vec{v}}$ from $N_i^{\vec{v}}$ by modifying arc capacities in such a way that we get two intermediate iterations for every two successive shortest paths in $N_i^{\vec{v}}$ that serve as a transition between the two paths and their corresponding spanning trees.
Recall that in $N_i^{\vec{v}}$ the capacities of the arcs of $N_i^{\vec{v}} \setminus N_{i-1}^{\vec{v}}$ are exactly the same as the capacity of the subnetwork $N_{i-1}^{\vec{v}}$.
In $S_i^{\vec{v}}$, we increase the capacity by one unit relative to the capacity of $S_{i-1}^{\vec{v}}$. 
The resulting capacities of the arcs in $S_i^{\vec{v}} \setminus S_{i-1}^{\vec{v}}$ are $x_i$ (for the moment), where $x_i = 2x_{i-1} + 1$ and $x_1 = 2$, \ie, $x_i=3 \cdot 2^{i-1} - 1$.

Similar to before, after $2x_{i-1}$ iterations the subnetwork $S_{i-1}^{\vec{v}}$ is saturated.
In contrast however, at this point the arcs $(s_i,s_{i-1})$, $(t_{i-1},t_i)$ are not saturated yet.
Instead, in the next two iterations, the arcs $(s_i,t_{i-1})$, $(s_{i-1},t_i)$ enter the basis and one unit of flow gets sent via the paths $s_i,s_{i-1},t_i$ and $s_i,t_{i-1},t_i$, which saturates the arcs $(s_i,s_{i-1})$, $(t_{i-1},t_i)$ and eliminates them from the basis.
Afterwards, in the next $2x_{i-1}$ iterations, flow is sent via $(s_i,t_{i-1})$, $(s_{i-1},t_i)$ and through $S_{i-1}^{\vec{v}}$ as before (cf.~Figure~\ref{fig:ssp_example} for an example execution of the \NSA{} on $S_2^{\vec{v}}$).

For the construction to work, we need that, in every non-intermediate iteration, arc $(s_0,t_0)$ not only enters the basis but, more importantly, is also the unique arc to leave the basis.
In other words, we want to ensure that no other arc becomes tight in these iterations.
For this purpose, we add an initial flow of~$1$ along the paths $s_i,s_{i-1},\dots,s_0$ and $t_0,t_1,\dots,t_i$ by adding supply 1 to $s_i$, $t_0$ and demand 1 to $s_0$, $t_i$ and increasing the capacities of the affected arcs by~$1$. The arcs of these two paths are the only arcs from the gadget that are contained in the initial spanning tree. We also increase the capacities of the arcs $(s_i,t_{i-1})$, $(s_{i-1},t_i)$ by one to ensure that these arcs are never saturated.

Finally, we also make sure that in every iteration the arc entering the basis is unique.
To achieve this, we introduce a parameter $r\in (2A,1-2A)$, $r\neq1/2$ and replace the costs of $2^{i-1} - \frac12 - \frac12v_i$ of the arcs $(s_i,t_{i-1})$, $(s_{i-1},t_i)$ by new costs $2^{i-1} - r - \frac12 v_i$ and $2^{i-1} - (1-r) - \frac12 v_i$, respectively.

We later use the final gadget $S_{n}^{\vec{v},r}$ as part of a larger network $G$ by connecting the nodes $s_{n},t_{n}$ to nodes in $G\setminus S_{n}^{\vec{v},r}$. 
The following lemma establishes the crucial properties of the gadget
used in such a way as a part of a larger network~$G$. 

\begin{lemma}
\label{lem:ns_gadget}Let $S_{i}^{\vec{v},r}$, $\vec{v}\in\{\vec{a},-\vec{a}\}$,
be part of a larger network $G$ and assume that before every iteration
of the \NSA{} on $G$ where flow is routed through $S_{i}^{\vec{v},r}$
there is a tree-path from $t_{i}$ to $s_{i}$ in the residual network
of $G$ that has cost smaller than $-2^{i+1}$ and capacity greater than~$1$. Then, there are exactly $2x_{i}=3\cdot2^{i}-2$ iterations in which
one unit of flow is routed from $s_{i}$ to~$t_{i}$ along arcs of $S_{i}^{\vec{v},r}$.
Moreover:
\begin{enumerate}
\item \label{one}In iteration $j=3k$, $k=0,\dots,2^{i}-1$, arc $(s_{0},t_{0})$ enters
the basis carrying flow $k\,\mathrm{mod}\,2$ and immediately exits the basis again
carrying flow $(k+1)\,\mathrm{mod}\,2$. The cost incurred by arcs of
$S_{i}^{\vec{v},r}$ is $k+\vec{v}_{i}^{[k]}$.
\item In iterations $j=3k+1,3k+2$, $k=0,\dots,2^{i}-2$, for some $0\leq i'\leq i$, the cost
incurred by arcs of $S_{i}^{\vec{v},r}$ is $k+r+\vec{v}_{i',i}^{[k]}$
and $k+(1-r)+\vec{v}_{i',i}^{[k]}$ in order of increasing cost. One
of the arcs $(s_{i'},s_{i'-1}),(s_{i'-1},t_{i'})$ and one of the arcs
$(s_{i'},t_{i'-1}),(t_{i'-1},t_{i'})$ each enter and leave the basis
in these iterations. 
\end{enumerate}
\end{lemma}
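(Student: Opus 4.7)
The proof proceeds by induction on $i$, with the base case $i = 1$ handled by an explicit case analysis of the four iterations that arise on $S_1^{\vec{v}, r}$. For the inductive step, the $2x_i = 3 \cdot 2^i - 2$ iterations of the \NSA{} on $S_i^{\vec{v}, r}$ decompose naturally into three consecutive parts: \emph{phase 1}, the first $2x_{i-1}$ iterations during which the cheap arcs $(s_i, s_{i-1})$ and $(t_{i-1}, t_i)$ remain in the basis carrying positive unsaturated flow; the \emph{level-$i$ transition}, two iterations in which the cheap arcs saturate and the expensive arcs enter the basis; and \emph{phase 2}, the final $2x_{i-1}$ iterations in which flow is routed via the expensive arcs.

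For phase 1, I combine the external tree-path from $t_i$ to $s_i$ (of cost $C < -2^{i+1}$) with the two cheap arcs, producing a residual tree-path from $t_{i-1}$ to $s_{i-1}$ of cost $C + v_i < -2^i$ and residual capacity strictly greater than~$1$. The induction hypothesis then applies to $S_{i-1}^{\vec{v}, r}$ viewed as a sub-gadget of $G$ via $S_i$, and yields $2x_{i-1}$ iterations with the claimed $(s_0, t_0)$ flow behaviour and the correct $S_{i-1}$-costs. Each such cycle additionally uses both cheap arcs forward, contributing exactly $v_i$ to the $S_i$-cost. Since the bit $k_{i-1}$ of $k$ vanishes throughout phase~1, one has $\vec{v}_i^{[k]} = \vec{v}_{i-1}^{[k]} + v_i$ and $\vec{v}_{i',i}^{[k]} = \vec{v}_{i',i-1}^{[k]} + v_i$, so the claimed cost formulas follow. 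For the level-$i$ transition, each cheap arc carries $1 + 2x_{i-1} = x_i$ units of flow after phase~1, with exactly one unit of residual capacity left. A direct computation of reduced costs with respect to the current spanning tree shows that $(s_{i-1}, t_i)$ is the unique minimum-reduced-cost non-tree arc; it enters the basis while $(s_i, s_{i-1})$ saturates and leaves, giving $S_i$-cost $2^{i-1} - (1-r) = k + r$ with $k = 2^{i-1} - 1$. The subsequent iteration symmetrically brings $(s_i, t_{i-1})$ in and $(t_{i-1}, t_i)$ out, with $S_i$-cost $2^{i-1} - r = k + (1-r)$. Both have $i' = i$ and hence $\vec{v}_{i,i}^{[k]} = 0$, matching the statement.

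For phase~2 I exploit the reflective symmetry of the sub-gadget $S_{i-1}^{\vec{v}, r}$ under swapping $s_{i-1} \leftrightarrow t_{i-1}$ (reversing the direction of all its internal arcs): the state of the sub-gadget at the start of phase~2 coincides, up to this swap, with its initial state, as highlighted by the caption of Figure~\ref{fig:ssp_example}. The new residual tree-path from $t_{i-1}$ to $s_{i-1}$ now traverses both expensive arcs in reverse together with the external path, for a total cost $-2^i + 1 + v_i + C < -2^i$, and still has residual capacity greater than~$1$. Invoking the induction hypothesis on this symmetric copy yields another $2x_{i-1}$ iterations with the same behaviour and $S_{i-1}$-costs, only with the roles of $r$ and $1 - r$ swapped in the transition costs (which is immaterial since the statement identifies the two transition costs only up to ordering by size). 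Indexing iterations of phase~2 by $k = 2^{i-1} + k'$ one has $k_{i-1} = 1$, and combining the $S_{i-1}$-cost with the costs of the expensive arcs traversed in the cycle reduces the total $S_i$-cost precisely to the claimed $k + \vec{v}_i^{[k]}$; the analogous accounting handles the transition iterations of phase~2.

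The main obstacle throughout is to verify that the \NSA{} with Dantzig's rule actually makes the expected choices --- that in every iteration the prescribed entering arc has a \emph{strictly} smaller reduced cost than every other non-tree arc, so the pivot element is uniquely determined and the execution is non-degenerate. The choice of parameters $|v_i| \leq A < 1/12$ and $r \in (2A, 1 - 2A)$ with $r \neq 1/2$ is tailored precisely to supply the required strict separations, and the detailed verification reduces to a finite family of numeric inequalities on expressions of the form $k + r + \vec{v}_{i',i}^{[k]}$ versus $k + (1-r) + \vec{v}_{i',i}^{[k]}$. A secondary bookkeeping challenge is to track, at every moment of the algorithm's execution, exactly which arcs are in the basis and in which orientation, so that the ``cost incurred by arcs of $S_i^{\vec{v},r}$'' can be unambiguously separated from external contributions; this is particularly delicate in phase~2, where the sub-gadget is effectively traversed in reverse.
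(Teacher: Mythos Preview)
Your proposal is correct and follows essentially the same three-phase inductive strategy as the paper: an explicit base case $i=1$, phase~1 via the induction hypothesis on $S_{i-1}^{\vec v,r}$ through the cheap arcs, the two-step level-$i$ transition, and phase~2 via the induction hypothesis on the ``reflected'' sub-gadget through the expensive arcs. One point the paper makes more explicit than you do: after the swap $s_{i-1}\leftrightarrow t_{i-1}$ and a uniform cost shift across the two boundary cuts, the residual network of $S_{i-1}^{\vec v,r}$ is identified not with $S_{i-1}^{\vec v,r}$ itself but with $S_{i-1}^{\vec v,\,1-r}$, so the induction hypothesis must be carried for \emph{all} admissible $r$ simultaneously (the paper assumes the claim for both $S_{i-1}^{\vec v,r}$ and $S_{i-1}^{\vec v,1-r}$); your remark that ``the roles of $r$ and $1-r$ are swapped'' captures this, but you should state it as part of the induction hypothesis. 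Also, your phase-2 tree-path cost $-2^i+1+v_i+C$ has a sign slip: the path goes \emph{forward} along the expensive arcs $(s_{i-1},t_i)$ and $(s_i,t_{i-1})$ and the external path from $t_i$ to $s_i$, giving cost $(2^i-1-v_i)+C$, which is still $<-2^i$ since $C<-2^{i+1}$.
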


\subsection{The \NSA{} implicitly solves \noun{Partition}\label{sub:ns_construction}}

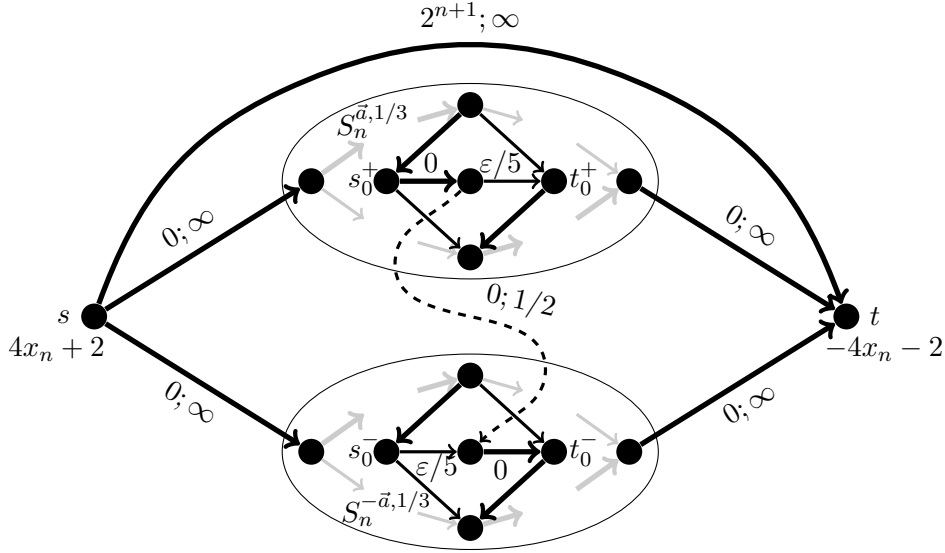
\begin{figure}[t]
\centering
\begin{tikzpicture}[node distance=0cm]
  \node (s) at (-5, 0) [node,label=left:$s$,label={[shift={(-.5,-.9)}]$4x_n + 2$}] {}; 
  \node (t) at ( 5, 0) [node,label=right:$t$,label={[shift={(.5,-.9)}]$-4x_n - 2$}] {};
  
  \node (Sp) at (0,1.8) {};
  \node (SpName) [above left = 0.3 and 0.55 of Sp] {$S^{\vec{a},1/3}_n$};
  \draw (Sp) ellipse (2.5 and 1.3);
  
  \node (SpIn) [node, left = 1.8 of Sp] {};
  \draw [arc,->,color=black!20,line width = 2] (SpIn) to ++(.7,0.5);
  \draw [arc,->,color=black!20] (SpIn) to ++(.7,-0.5);
  \node (SpOut) [node, right = 1.8 of Sp] {};
  \draw [arc,<-,color=black!20,line width = 2] (SpOut) to ++(-.7,-0.5);
  \draw [arc,<-,color=black!20] (SpOut) to ++(-.7,0.5);
  \draw [arc,line width = 2] (s) to node [auto,sloped,above] {$0;\infty$} (SpIn); 
  \draw [arc,line width = 2] (SpOut) to node [auto,sloped,above] {$0;\infty$} (t);
  
  \node (Sp_s0) [node, left = .8 of Sp,label={[shift={(-.3,-0.5)}]$s_0^+$}] {};
  \node (Sp_t0) [node, right = .8 of Sp,label={[shift={(.4,-0.5)}]$t_0^+$}] {};
  \node (Sp_s1) [node, above = .7 of Sp] {};
  \node (Sp_t1) [node, below = .7 of Sp] {};
  \node (Sp_c) [node] at (Sp){};
  
  \draw [arc,color=black!20] (Sp_s1) to ++(.7,-0.2);
  \draw [arc,<-,color=black!20,line width = 2] (Sp_s1) to ++(-.7,-0.2);
  \draw [arc,color=black!20,line width = 2] (Sp_t1) to ++(.7,0.2);
  \draw [arc,<-,color=black!20] (Sp_t1) to ++(-.7,0.2);
  
  \draw [arc,line width = 2] (Sp_s1) to (Sp_s0); \draw [arc] (Sp_s1) to (Sp_t0);
  \draw [arc,line width = 2] (Sp_t0) to (Sp_t1); \draw [arc] (Sp_s0) to (Sp_t1);
  \draw [arc,line width = 2] (Sp_s0) to node [auto,pos=.55,above=-1pt] {$0$} (Sp_c);
  \draw [arc] (Sp_c) to node [auto,pos=.3,above=-3.5pt] {$\eps/5$} (Sp_t0);
  
  \node (Sm) at (0,-1.8) {};
  \node (SmName) [below left = 0.25 and 0.25 of Sm] {$S^{-\vec{a},1/3}_n$};
  \draw (Sm) ellipse (2.5 and 1.3);
  
  \node (SmIn) [node, left = 1.8 of Sm] {};
  \draw [arc,->,color=black!20,line width = 2] (SmIn) to ++(.7,0.5);
  \draw [arc,->,color=black!20] (SmIn) to ++(.7,-0.5);
  \node (SmOut) [node, right = 1.8 of Sm] {};
  \draw [arc,<-,color=black!20,line width = 2] (SmOut) to ++(-.7,-0.5);
  \draw [arc,<-,color=black!20] (SmOut) to ++(-.7,0.5);
  \draw [arc,line width = 2] (s) to node [auto,sloped,below] {$0;\infty$} (SmIn); 
  \draw [arc,line width = 2] (SmOut) to node [auto,sloped,below] {$0;\infty$} (t);
  
  \node (Sm_s0) [node, left = .8 of Sm,label={[shift={(-.3,-0.5)}]$s_0^-$}] {};
  \node (Sm_t0) [node, right = .8 of Sm,label={[shift={(.4,-0.5)}]$t_0^-$}] {};
  \node (Sm_s1) [node, above = .7 of Sm] {};
  \node (Sm_t1) [node, below = .7 of Sm] {};
  \node (Sm_c) [node] at (Sm){};
  
  \draw [arc,color=black!20] (Sm_s1) to ++(.7,-0.2);
  \draw [arc,<-,color=black!20,line width = 2] (Sm_s1) to ++(-.7,-0.2);
  \draw [arc,color=black!20,line width = 2] (Sm_t1) to ++(.7,0.2);
  \draw [arc,<-,color=black!20] (Sm_t1) to ++(-.7,0.2);
  
  \draw [arc,line width = 2] (Sm_s1) to (Sm_s0); \draw [arc] (Sm_s1) to (Sm_t0);
  \draw [arc,line width = 2] (Sm_t0) to (Sm_t1); \draw [arc] (Sm_s0) to (Sm_t1);
  \draw [arc] (Sm_s0) to node [auto,below=-3.5pt,pos=.65] {$\eps/5$} (Sm_c);
  \draw [arc,line width = 2] (Sm_c) to node [auto,below=-1.5pt,pos=.3] {$0$} (Sm_t0);
  
  \draw [arc,dashed] (Sp_c) to [out=-135,in=90] (-1,.7) to [out=-90,in=90] node [auto,sloped] {$0;1/2$} (1,-.7) to [out=-90,in=45] (Sm_c);
  
  \draw [arc,line width=2] (s) to [out=70,in=205] (-2.5,3) to [out=25,in=155] node [auto,sloped] {$2^{n+1};\infty$} (2.5,3) to [out=-25,in=110] (t);
\end{tikzpicture}
  \caption{\label{fig:ns_construction}
  Illustration of network $G_{\mathrm{ns}}^{\vec{a}}$. The subnetworks $S_n^{\vec{a},1/3}$ and $S_n^{-\vec{a},1/3}$ are advanced independently by the \NSA{} without using the dashed arc $e$, unless the \noun{Partition} instance $\vec{a}$ has a solution. Bold arcs are in the initial basis and carry a flow of at least $1$ throughout the execution of the algorithm.
}
\end{figure}

We construct a network $G_{\mathrm{ns}}^{\vec{a}}$ similar to the network $G_{\mathrm{ssp}}^{\vec{a}}$ of Section~\ref{sub:ssp_construction}.
Without loss of generality, we assume that $a_1 = 0$.
The network $G_{\mathrm{ns}}^{\vec{a}}$ consists of the two gadgets $S_{n}^{\vec{a},1/3}$, $S_{n}^{-\vec{a},1/3}$, connected to a new source node $s$ and a new sink $t$ (cf.~Figure~\ref{fig:ns_construction}). Let $s_i^+$, $t_i^+$ denote the nodes of $S_n^{\vec{a},1/3}$ and $s_i^-$, $t_i^-$ denote the nodes of $S_n^{-\vec{a},1/3}$.
We introduce arcs $(s,s_n^+)$, $(s,s_n^-)$, $(t_n^+,t)$, $(t_n^-,t)$, each with capacity $\infty$ and cost $0$. The supply~$1$ of $s_n^+$ and $s_n^-$ is moved to $s$ and the initial flow on arcs $(s,s_n^+)$ and $(s,s_n^-)$ is set to~$1$. Similarly, the demand~$1$ of $t_n^+$ and $t_n^-$ is moved to $t$ and the initial flow on arcs $(t_n^+,t)$ and $(t_n^-,t)$ is set to~$1$. Finally, we add an infinite capacity arc $(s,t)$ of cost $2^{n+1}$, increase the supply of $s$ and the demand of $t$ by $4x_n$, and set the initial flow on $(s,t)$ to $4x_n$.

In addition, we add two new nodes $c^+$, $c^-$ and replace the arc $(s_0^+,t_0^+)$ by two arcs $(s_0^+,c^+)$, $(c^+,t_0^+)$ of capacity $2$ and cost $0$ (for the moment), and analogously for the arc $(s_0^-,t_0^-)$ and $c^-$.
Finally, we move the demand of $1$ from $s_0^+$ to $c^+$ and the supply of $1$ from $t_0^-$ to $c^-$.
The arcs $(s_0^+,c^+)$ and $(c^-,t_0^-)$ carry an initial flow of $1$ and are part of the initial basis.
Observe that these modifications do not change the behavior of the gadgets.
In addition to the properties of Lemma~\ref{lem:ns_gadget} we have that whenever the arc $(s_0,t_0)$ previously carried a flow of $1$, now the arc $(c^+,t_0^+)$ or $(s_0^-,c^-)$ is in the basis, and whenever $(s_0,t_0)$ previously did not carry flow, now the arc $(s_0^+,c^+)$ or $(c^-,t_0^-)$ is in the basis.

We slightly increase the costs of the arcs $(c^+,t_0^+)$ and $(s_0^-,c^-)$ from $0$ to~$\frac15 \varepsilon$, again without affecting the behavior of the gadgets
 (note that we can perturb all costs in $S^{-\vec{a},1/3}_n$ further to ensure that every pivot step is unique).
Finally, we add one more arc $e = (c^+,c^-)$ with cost $0$ and capacity~$\frac12$. 

\begin{lemma}\label{lem:ns_construction}
Arc $e$ enters the basis in some iteration of the \NSA{} on network $G_{\mathrm{ns}}^{\vec{a}}$ if and only if the \noun{Partition} instance $\vec{a}$ has a solution.
\end{lemma}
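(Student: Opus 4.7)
The plan is to apply Lemma~\ref{lem:ns_gadget} to both gadgets in tandem---under the assumption that arc~$e$ stays out of the basis---and then compare, at each iteration, the reduced cost of~$e$ against those of the competing entering arcs within the two gadgets. Throughout I use the convention that tree-arc reduced costs vanish, so that a non-tree arc $(u,v)$ has reduced cost $\bar c_{uv}=c_{uv}+\pi_u-\pi_v$; setting $\pi_s=0$ then fixes $\pi_t=2^{n+1}$, $\pi_{s_n^{\pm}}=0$, and $\pi_{t_n^{\pm}}=2^{n+1}$ as long as the basic external arcs $(s,t)$, $(s,s_n^{\pm})$, and $(t_n^{\pm},t)$ persist.

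First I would verify that the hypothesis of Lemma~\ref{lem:ns_gadget} remains in force throughout the execution: the required negative tree-path from $t_n^{\pm}$ back to $s_n^{\pm}$ is provided by $t_n^{\pm}\to t\to s\to s_n^{\pm}$, which uses the tree arc $(s,t)$ in reverse at cost~$-2^{n+1}$ and has residual capacity equal to the flow still on $(s,t)$. The initial flow $4x_n$ on $(s,t)$ matches exactly the combined amount that the two gadgets will route through $s$ and~$t$, so the capacity condition is preserved. The lemma then implies that the two gadgets deterministically advance; I parameterize states by $(k_+,k_-)$, the numbers of completed main iterations in each gadget, with four parity cases according to whether $(s_0^+,c^+)$ or $(c^+,t_0^+)$, and whether $(c^-,t_0^-)$ or $(s_0^-,c^-)$, is currently basic.

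The heart of the argument is the ``both-even'' parity case. For each gadget, Lemma~\ref{lem:ns_gadget} identifies the internal cost of the cycle formed in its next main iteration as $k_+ + \vec{a}_n^{[k_+]}$ (respectively $k_- - \vec{a}_n^{[k_-]}$), so combined with the external contribution $-2^{n+1}$ and the $\tfrac15\varepsilon$ perturbation of the entering $c^{\pm}$-incident arc, the reduced cost of the $+$-gadget's next entering arc is $k_+ + \vec{a}_n^{[k_+]} - 2^{n+1} + \tfrac15\varepsilon$, and analogously $k_- - \vec{a}_n^{[k_-]} - 2^{n+1} + \tfrac15\varepsilon$ for the $-$-gadget. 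The identity $\pi_{t_0^+}-\pi_{s_0^+}=2^{n+1}-k_+-\vec{a}_n^{[k_+]}$ derived from the same cycle-cost statement, together with its $-$-analogue, yields
\[
\bar c_e \;=\; \tfrac12(k_++k_-)+\tfrac12\bigl(\vec{a}_n^{[k_+]}-\vec{a}_n^{[k_-]}\bigr)-2^{n+1}.
\]
Summing the two strict inequalities $\bar c_e<\bar c_+$ and $\bar c_e<\bar c_-$ yields $0<\tfrac25\varepsilon$, so both can hold simultaneously only when each is near-tight; since every $\vec{a}_n^{[k]}$ is an integer multiple of $\varepsilon$, this forces $k_+=k_-$ and $\vec{a}_n^{[k_+]}=0$, i.e., a Partition solution in the sense of Proposition~\ref{prop:ak}. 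In the three remaining parity cases the dominant $-2^{n+1}$ term cancels out of $\bar c_e$, so $e$ is far from competitive; intermediate iterations $3k+1,3k+2$, whose reduced costs are offset by $r=\tfrac13$ or $1-r=\tfrac23$, are ruled out by $A<\tfrac{1}{12}$.

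Both directions of the statement follow. If Partition has a solution, the symmetry $\vec{a}_n^{[k]}=-\vec{a}_n^{[2^n-1-k]}$ provides an even witness $k^*$, the algorithm reaches the state $(k^*,k^*)$ in both-even parity because the sign of $\vec{a}_n^{[k]}$ at balanced states dictates which gadget advances next (maintaining $|k_+-k_-|\le 1$), and at this state $e$ strictly wins the reduced-cost comparison and enters the basis. Otherwise $e$ never becomes the uniquely most negative arc and both gadgets run to completion without~$e$. The main obstacle is the tie-breaking bookkeeping across all four parity cases and the intermediate iterations, where one must verify that the $\tfrac15\varepsilon$ perturbations on $(c^+,t_0^+)$ and $(s_0^-,c^-)$ are calibrated precisely to tip the balance in favor of~$e$ at Partition solutions without disrupting either gadget's prescribed sequence of pivots.
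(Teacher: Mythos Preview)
Your reduced-cost framework is a legitimate alternative to the paper's operational path-tracking (the paper marches both gadgets forward in blocks of twelve iterations, does a three-way case split on the sign of $\vec a_n^{[4k]}$, and explicitly lists the eight or twelve pivots in each block). Your argument, however, has a real gap at its centerpiece: the formula
\[
\bar c_e \;=\; \tfrac12(k_++k_-)+\tfrac12\bigl(\vec a_n^{[k_+]}-\vec a_n^{[k_-]}\bigr)-2^{n+1}
\]
does \emph{not} follow from the two identities $\pi_{t_0^{\pm}}-\pi_{s_0^{\pm}}=2^{n+1}-k_{\pm}\mp\vec a_n^{[k_{\pm}]}$ alone. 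You have $\bar c_e=\pi_{s_0^+}-\pi_{t_0^-}$, which mixes a potential from one gadget with a potential from the other; the two within-gadget differences determine $\pi_{s_0^+}-\pi_{t_0^+}$ and $\pi_{s_0^-}-\pi_{t_0^-}$ but give no cross-gadget relation. To pin down $\pi_{s_0^+}$ and $\pi_{t_0^-}$ individually you must know the actual tree paths from $s_n^{\pm}$ (or $t_n^{\pm}$) to the center, and these paths pass through arcs whose costs involve $r=\tfrac13$ versus $1-r=\tfrac23$ asymmetrically once intermediate pivots have occurred. Your ``together with its $-$-analogue, yields'' step is therefore unjustified; making it rigorous requires either an explicit inductive computation of the tree-path costs (which reintroduces much of the bookkeeping you were trying to avoid) or an additional symmetry argument relating the $r$-dependent parts of the two gadgets' trees.

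The paper sidesteps this by never computing $\bar c_e$ globally: it exploits the assumption $a_1=0$ to reset both $S_1^{\pm\vec a,1/3}$ subnetworks to their initial configuration at every state $12k$, and then compares only the four short $s_1^{\pm}$--$t_1^{\pm}$ paths of costs $\tfrac15\varepsilon,\tfrac13,\tfrac23,1-\tfrac15\varepsilon$ against the mixed path through~$e$. That localization is what makes the case analysis tractable. Your synchronization claim (that the sign of $\vec a_n^{[k]}$ keeps $|k_+-k_-|\le1$) and your use of the reflection $\vec a_n^{[k]}=-\vec a_n^{[2^n-1-k]}$ to obtain an even witness are both correct ingredients, but they sit on top of the unproven $\bar c_e$ identity.
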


Again, we assume that a single bit of the complete configuration of the Turing machine corresponding to the \SA{} can be used to detect whether a variable is in the basis and that the identity of this bit can be determined in polynomial time. Under this natural assumption, we get the following result, which implies Theorem~\ref{thm:NP-mighty} for the \NSA{} and thus the \SA{}.

\begin{corollary}
The \NSA{} implicitly solves \noun{Partition}.
\end{corollary}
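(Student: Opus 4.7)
The plan is to derive the corollary directly from Lemma~\ref{lem:ns_construction} together with Definition~\ref{def:impl-solve}, in complete parallel with the way Corollary~\ref{cor:ssp_solves_partition} is obtained from Lemma~\ref{lem:ssp_construction} in Section~\ref{sub:ssp_construction}. Since \noun{Partition} is \noun{NP}-complete, showing that the \NSA{} implicitly solves \noun{Partition} is enough to conclude that it implicitly solves every problem in \noun{NP}, matching Definition~\ref{def:NP-mighty}.

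First, I would verify the polynomial-time transformation required by Definition~\ref{def:impl-solve}. Given an instance $\vec{a}=(a_1,\dots,a_n)$ of \noun{Partition}, the network $G_{\mathrm{ns}}^{\vec{a}}$ described in Section~\ref{sub:ns_construction} has $O(n)$ nodes and $O(n)$ arcs, with arc costs, capacities, supplies, and the initial basic feasible flow each being rational numbers of bit length polynomial in the encoding of $\vec{a}$ (the costs are bounded by $O(2^n)$ plus linear combinations of the $a_i$, hence they have polynomial bit length). Thus the encoding of $G_{\mathrm{ns}}^{\vec{a}}$, together with the initial basis described in Figure~\ref{fig:ns_construction}, is computable from $\vec{a}$ in polynomial time and serves as the input $I'$ to the Turing machine implementing the \NSA{}.

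Next, I would identify the distinguished bit~$b$. By the assumption stated immediately before the corollary, we may locate in polynomial time a bit of the machine's complete configuration that records whether a designated variable belongs to the current basis. I take $b$ to be the bit associated with the arc $e=(c^+,c^-)$ introduced at the end of Section~\ref{sub:ns_construction}. From the initialization prescribed in Section~\ref{sub:ns_construction}, arc $e$ is \emph{not} part of the initial spanning tree (the initial basis consists precisely of the bold arcs in Figure~\ref{fig:ns_construction}), so $b$ starts at~$0$.

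Finally, I would invoke Lemma~\ref{lem:ns_construction}: arc $e$ enters the basis in some iteration of the \NSA{} on $G_{\mathrm{ns}}^{\vec{a}}$ if and only if the \noun{Partition} instance $\vec{a}$ has a solution. Equivalently, the bit~$b$ flips during the execution of the \NSA{} on input $I'$ iff $\vec{a}$ is a yes-instance of \noun{Partition}, which is exactly the condition demanded by Definition~\ref{def:impl-solve}. There is no genuine obstacle remaining: the whole combinatorial difficulty of the argument has been packed into the construction of $G_{\mathrm{ns}}^{\vec{a}}$ and into Lemma~\ref{lem:ns_construction}, and the role of this corollary is only to translate that lemma into the formal language of implicit solving and to note that the polynomial-time reduction and the bit identification are both harmless.
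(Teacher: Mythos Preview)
Your proposal is correct and matches the paper's approach exactly: the paper treats this corollary as an immediate consequence of Lemma~\ref{lem:ns_construction} together with the stated assumption that a single, polynomially identifiable bit of the configuration records basis membership of~$e$, and your write-up simply spells out the verification that Definition~\ref{def:impl-solve} is satisfied. No additional ideas are needed.
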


\section{Conclusion}\label{sec:conclusion}

We have introduced the concept of \emph{\noun{NP}-mightiness} as a novel means of classifying the computational power of algorithms. Furthermore, we have given a justification for the exponential worst-case behavior of \SSPA{} and the (Network) \SM{} (with Dantzig's pivot rule): These algorithms can implicitly solve any problem in \noun{NP}.

A natural open problem is whether the studied algorithms are perhaps even more powerful than our results suggest. Maybe, similarly to the result of Goldberg et al.~\cite{GoldbergEtAl13} for the Lemke-Howson algorithm, the \SA{} can be shown to implicitly solve even \noun{PSPACE}-hard problems. In line with this question, it would be interesting to investigate how difficult it is to predict which optimum solution the \SA{} will produce for a fixed pivot rule, \ie, how difficult is the problem the \SA{} is \emph{explicitly} solving? 

We hope that our approach will turn out to be useful in developing a better understanding of other algorithms that suffer from poor worst-case behavior. In particular, we believe that our results can be carried over to the \SM{} with other pivot rules. Furthermore, even polynomial-time algorithms with a super-optimal worst-case running time are an interesting subject. Such algorithms might implicitly solve problems that are presumably more difficult than the problem they were designed for. In order to achieve meaningful results in this context, our definition of `implicitly solving' (Definition~\ref{def:impl-solve}) would need to be modified by further restricting the running time of the transformation of instances.





\bibliographystyle{plain}
\bibliography{SuccShortPath.bib}

\newpage

\appendix

\section{Omitted proofs of Section~\ref{sec:ssp}}

\setcounter{lemma}{0}

\begin{lemma}
For $\vec{v}\in\{\vec{a},-\vec{a}\}$ and $i=1,\dots,n$,
the \SSPA{} applied to network $N_{i}^{\vec{v}}$
with source $s_{i}$ and sink $t_{i}$ needs $2^{i}$ iterations to
find a maximum $s_{i}$-$t_{i}$-flow of minimum cost. In each iteration
$j=0,1,\dots2^{i}-1$, the algorithm augments one unit of flow along
a path of cost $j+\vec{v}_{i}^{[j]}$ in the residual network.
\end{lemma}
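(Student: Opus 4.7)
I would proceed by induction on $i$. The base case $i=0$ is immediate: $N_0^{\vec{v}}$ consists of the single arc $(s_0,t_0)$ of unit capacity and cost $0$, and the \SSPA{} performs $2^0=1$ iteration augmenting one unit of flow along a path of cost $0=0+\vec{v}_0^{[0]}$.

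For the inductive step I fix $i\ge 1$, assume the claim for $N_{i-1}^{\vec{v}}$, and split the execution of the \SSPA{} on $N_i^{\vec{v}}$ into two phases of $2^{i-1}$ iterations each. In \emph{phase 1}, consisting of iterations $j=0,\dots,2^{i-1}-1$, I claim that the \SSPA{} augments along the path $s_i \to s_{i-1} \to (\text{shortest }s_{i-1}\text{-}t_{i-1}\text{-path in }N_{i-1}^{\vec{v}}) \to t_{i-1} \to t_i$. By induction, the inner path has cost $j+\vec{v}_{i-1}^{[j]}$, so the total cost is $v_i + j + \vec{v}_{i-1}^{[j]}$. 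Since $j<2^{i-1}$ has $j_{i-1}=0$, the definition of $\vec{v}_i^{[j]}$ gives $\vec{v}_i^{[j]} = \vec{v}_{i-1}^{[j]} + v_i$, so the cost is exactly $j+\vec{v}_i^{[j]}$ as required. To justify that this is the unique shortest augmenting path, I would compare it with the only alternative residual paths available (the ``one expensive arc'' paths like $s_i\to s_{i-1}\to t_i$ and $s_i\to t_{i-1}\to t_i$), each of cost $\frac{1}{2}(2^i-1)=2^{i-1}-1/2$, and use $A<1/12$ and $j\le 2^{i-1}-1$ to see that the phase-1 path is strictly shorter.

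At the end of phase 1, the arcs $(s_i,s_{i-1})$ and $(t_{i-1},t_i)$ and the subnetwork $N_{i-1}^{\vec{v}}$ are all saturated. In \emph{phase 2}, iteration $j=2^{i-1}+k$ for $k=0,\dots,2^{i-1}-1$, the only way to augment is via $s_i \to t_{i-1} \to (\text{reverse path through }N_{i-1}^{\vec{v}}) \to s_{i-1} \to t_i$. Here I invoke the optimality of \SSPA: after $k$ phase-2 iterations, the flow inside $N_{i-1}^{\vec{v}}$ is a min-cost flow of value $2^{i-1}-k$, hence coincides with the flow after $2^{i-1}-k$ iterations of \SSPA{} run on $N_{i-1}^{\vec{v}}$ in isolation; consequently the cheapest residual $t_{i-1}$-to-$s_{i-1}$ path has cost $-\bigl((2^{i-1}-1-k)+\vec{v}_{i-1}^{[2^{i-1}-1-k]}\bigr)$ (the negative of the most expensive remaining forward path given by the inductive hypothesis). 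Adding the two expensive arc costs yields a total of $2^{i-1}+k - v_i - \vec{v}_{i-1}^{[2^{i-1}-1-k]}$.

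The final step is the arithmetic identity $\vec{v}_{i-1}^{[2^{i-1}-1-k]} = -\vec{v}_{i-1}^{[k]}$, which follows from the observation that the binary representation of $2^{i-1}-1-k$ on $i-1$ bits is the bitwise complement of that of $k$, so each sign $(-1)^{k_{l-1}}$ flips. Combined with $\vec{v}_i^{[2^{i-1}+k]} = \vec{v}_{i-1}^{[k]} - v_i$ (now $j_{i-1}=1$), the phase-2 cost becomes $2^{i-1}+k+\vec{v}_{i-1}^{[k]}-v_i = j+\vec{v}_i^{[j]}$, completing the inductive step. Finally, $2^i$ iterations exhaust the maximum $s_i$-$t_i$-flow value (which is $2^i$ by inspection of the $s_i$-cut), so the algorithm indeed terminates here. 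The main obstacle is keeping track of how reverse-routing through $N_{i-1}^{\vec{v}}$ in phase 2 reuses the inductive hypothesis via \SSPA{} optimality; everything else is a direct cost computation and the clean bit-complement identity.
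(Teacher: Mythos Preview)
Your induction and phase-1 analysis match the paper's, but your phase-2 argument takes a genuinely different route. The paper proceeds structurally: it observes that the residual of the saturated $N_{i-1}^{\vec{v}}$, after shifting the four outer residual-arc costs by $\tfrac12(2^{i-1}-1)$ and swapping $s_{i-1}\leftrightarrow t_{i-1}$, is literally a fresh copy of $N_{i-1}^{\vec{v}}$, so the inductive hypothesis applies verbatim to the second half and the cost of iteration $j\ge 2^{i-1}$ falls out as $(2^i-1-v_i)+(j-2^{i-1}+\vec{v}_{i-1}^{[j-2^{i-1}]})-(2^{i-1}-1)=j+\vec{v}_i^{[j]}$. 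You instead invoke the optimality invariant of the \SSPA{}---the flow restricted to $N_{i-1}^{\vec{v}}$ remains a min-cost flow of its current value---read the backward marginal cost off the inductive sequence, and close with the bit-complement identity $\vec{v}_{i-1}^{[2^{i-1}-1-k]}=-\vec{v}_{i-1}^{[k]}$. The paper's route is self-contained but must carry the extra invariant that after $2^i$ steps $N_{i-1}^{\vec{v}}$ is empty while $N_i^{\vec{v}}\setminus N_{i-1}^{\vec{v}}$ is saturated; yours is slicker but imports a global property of the \SSPA{}.

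Two small points deserve one extra line each in a full write-up. In phase~1, once some flow sits inside $N_{i-1}^{\vec{v}}$, the path using \emph{both} expensive arcs plus a backward detour through $N_{i-1}^{\vec{v}}$ is also available; you should rule it out too (the paper bounds any $t_{i-1}$-$s_{i-1}$ path in the bidirected network below by $-2^{i-1}+1-A$, which suffices). In phase~2, your optimality argument delivers the \emph{cost} of the shortest backward path but not directly its \emph{bottleneck}; note that the marginals $(m-1)+\vec{v}_{i-1}^{[m-1]}$ are strictly increasing (consecutive ones differ by at least $1-2A>0$), so augmenting more than one unit along the current backward path would produce a flow of value $m-2$ with cost below $C(m-2)$---hence the bottleneck is exactly one.
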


\begin{proof}
We prove the lemma by induction on $i$, together with the additional
property that after $2^{i}$ iterations none of the arcs in $N_{i-1}^{\vec{v}}$
carries any flow, while the arcs in $N_{i}^{\vec{v}}\setminus N_{i-1}^{\vec{v}}$
are fully saturated. First consider the network $N_{0}^{\vec{v}}$.
In each iteration where $N_{0}^{\vec{v}}$ does not carry flow, one
unit of flow can be routed from $s_{0}$ to $t_{0}$. Conversely,
when $N_{0}^{\vec{v}}$ is saturated, one unit of flow can be routed
from $t_{0}$ to $s_{0}$. In either case the associated cost is 0.
With this in mind, it is clear that on $N_{1}^{\vec{v}}$ the
\SSPA{} terminates after two iterations.
In the first, one unit of flow is sent along the path $s_{1},s_{0},t_{0},t_{1}$
of cost $v_{1}=\vec{v}_{1}^{[0]}$. In the second iteration, one unit
of flow is sent along the path $s_{1},t_{0},s_{0},t_{1}$ of cost
$-v_{1}=\vec{v}_{1}^{[1]}$. Afterwards, the arc $(s_{0},t_{0})$
does not carry any flow, while all other arcs are fully saturated. 

Now assume the claim holds for $N_{i-1}^{\vec{v}}$ and consider
network $N_{i}^{\vec{v}}$, $i>1$. Observe that every path using either
of the arcs $(s_{i},t_{i-1})$ or $(s_{i-1},t_{i})$ has a cost of
more than $2^{i-1}-3/4$. To see this, note that the cost of these
arcs is bounded individually by $\frac{1}{2}(2^{i}-1-v_{i})>2^{i-1}-3/4$,
since $|v_{i}|<A<1/4$. On the other hand, it can be seen inductively that
the shortest $t_{i-1}$-$s_{i-1}$-path in the bidirected network associated
with $N_{i-1}^{\vec{v}}$ has cost at least $-2^{i-1}+1-A>-2^{i-1}+3/4$. Hence,
using both $(s_{i},t_{i-1})$ and $(s_{i-1},t_{i})$ in addition to
a path from $t_{i-1}$ to $s_{i-1}$ incurs cost at least $2^{i-1}-3/4$.
By induction, in every iteration $j<2^{i-1}$, the \SSPA{} thus does not use the arcs $(s_{i},t_{i-1})$ or $(s_{i-1},t_{i})$
but instead augments one unit of flow along the arcs $(s_{i},s_{i-1}),(t_{i-1},t_{i})$
and along an $s_{i-1}$-$t_{i-1}$-path of cost $j+\vec{v}_{i-1}^{[j]}<2^{i-1}-3/4$
through the subnetwork $N_{i-1}^{\vec{v}}$. The total cost of this
$s_i$-$t_i$-path is $v_{i}+(j+\vec{v}_{i-1}^{[j]})=j+\vec{v}_{i}^{[j]}$, since
$j<2^{i-1}$.

After $2^{i-1}$ iterations, the arcs $(s_{i},s_{i-1})$ and $(t_{i-1},t_{i})$
are both fully saturated, as well as (by induction) the arcs in $N_{i-1}^{\vec{v}}\setminus N_{i-2}^{\vec{v}}$,
while all other arcs are without flow. Consider the residual network
of $N_{i-1}^{\vec{v}}$ at this point. If we increase
the costs of the four residual arcs in $N_{i-1}^{\vec{v}}\setminus N_{i-2}^{\vec{v}}$ 
by $\frac{1}{2}(2^{i-1}-1)$ 
and switch the roles of $s_{i-1}$ and $t_{i-1}$, we obtain back
the original subnetwork $N_{i-1}^{\vec{v}}$. The shift of the residual
costs effectively makes every $t_{i-1}$-$s_{i-1}$-path more expensive
by $2^{i-1}-1$, but does not otherwise affect the behavior of the
network. We can thus use induction again to infer that in every iteration
$j=2^{i-1},\dots,2^{i}-1$ the \SSPA{}
augments one unit of flow along a path via $s_{i},t_{i-1},N_{i-1}^{\vec{v}},s_{i-1},t_{i}$.
Accounting for the shift in cost by $2^{i-1}-1$, we obtain that this
path has a total cost of
\[
(2^{i}-1-v_{i})+(j-2^{i-1}+\vec{v}_{i-1}^{[j-2^{i-1}]})-(2^{i-1}-1)=j+\vec{v}_{i}^{[j]},
\]
where we used $\vec{v}_{i-1}^{[j-2^{i-1}]}=\vec{v}_{i-1}^{[j]}$ and
$\vec{v}_{i-1}^{[j]}-v_{i}=\vec{v}_{i}^{[j]}$ for $j\in[2^{i-1},2^{i})$.
After $2^{i}$ iterations the arcs in $N_{i}^{\vec{v}}\setminus N_{i-1}^{\vec{v}}$
are fully saturated and all other arcs carry no flow.
\end{proof}

\setcounter{lemma}{1}

\begin{lemma}
 The \SSPA{} on network $G_{\mathrm{ssp}}^{\vec{a}}$ augments flow along arc $e$ if and only if the \noun{Partition} instance $\vec{a}$ has a solution.
\end{lemma}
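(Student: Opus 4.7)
The plan is to exploit Lemma~\ref{lem:ssp_gadget} to show that, as long as the SSP algorithm has not yet augmented along arc~$e$, it advances the two subgadgets $N_n^{\vec{a}}$ and $N_n^{-\vec{a}}$ independently, and to compare at every stage the cost of the next gadget iteration with the cost of the cheapest $s$-$t$-path through~$e$, pinpointing exactly when~$e$ becomes attractive. First I would quantify the effect of raising the cost of $(s_0,t_0)$ from~$0$ to $\varepsilon/5$ in each gadget: an induction on the recursive structure of $N_n^{\vec{v}}$ shows that the iteration-$k$ path of Lemma~\ref{lem:ssp_gadget} traverses $(s_0,t_0)$ in the forward direction when $k$ is even and through its residual when $k$ is odd, so after the modification the iteration-$k$ cost becomes $k+\vec{v}_n^{[k]}+(-1)^{k}\varepsilon/5$. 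As long as~$e$ is unused, the residual networks of the two subgadgets interact only through the common nodes $s,t$; since successive iteration costs in each gadget differ by approximately one, a simple induction shows that after $2k$ total iterations each subgadget has completed exactly its first $k$ iterations, so the three candidate paths for the next pivot are the next iteration of $N_n^{\vec{a}}$, the next iteration of $N_n^{-\vec{a}}$, or a path using~$e$.

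The crux of the proof is computing the cost of the cheapest $s$-$t$-path through~$e$ at this stage. Exploiting the $s$-$t$ symmetry of each gadget and working through the recursion level by level, one shows that the shortest $s_n^+$-$s_0^+$-path in the residual of $N_n^{\vec{a}}$ after $k$ iterations has cost $\tfrac{1}{2}(k+\vec{a}_n^{[k]})-k_0\varepsilon/5$ and, analogously, that the shortest $t_0^-$-$t_n^-$-path in the residual of $N_n^{-\vec{a}}$ has cost $\tfrac{1}{2}(k-\vec{a}_n^{[k]})-k_0\varepsilon/5$, where $k_0$ denotes the least significant bit of~$k$. Summed with the zero cost of~$e$, the cheapest $e$-path has total cost $k-2k_0\varepsilon/5$, in which the $\vec{a}_n^{[k]}$ contributions cancel between the two gadgets; this is the cancellation that turns the Partition condition at level~$k$ into the condition that the $e$-path becomes minimum.

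Comparing, the $e$-path is strictly cheaper than both next-iteration paths iff $|\vec{a}_n^{[k]}|<\varepsilon/5$; since each $a_i$ (and hence $\vec{a}_n^{[k]}$) is a multiple of~$\varepsilon$, this condition collapses to $\vec{a}_n^{[k]}=0$. Hence the SSP algorithm augments flow along~$e$ in some iteration iff $\vec{a}_n^{[k]}=0$ for some $k\in\{0,\dots,2^n-1\}$, which by Proposition~\ref{prop:ak} is equivalent to $\vec{a}$ admitting a \noun{Partition}. The main obstacle will be the formula for the half-gadget residual shortest-path costs in the second paragraph; its proof requires a careful inductive argument that disentangles the many forward and residual arcs appearing at each level of the recursive construction and tracks their $\pm a_i/2$ and $\pm\varepsilon/5$ contributions consistently throughout the execution.
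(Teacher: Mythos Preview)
Your proposal is correct and follows essentially the same route as the paper: both arguments keep the two gadgets synchronized, compute the cost of the cheapest $e$-path by exploiting the $s$--$t$ symmetry of each gadget (your explicit half-gadget formula $\tfrac12(k\pm\vec{a}_n^{[k]})-k_0\varepsilon/5$ is exactly what the paper uses implicitly when it writes ``$P$ follows $P^+$ to node $s_0$ \dots\ the cost of this path is exactly~$j$''), and then compare against the two gadget-only paths $P^+$, $P^-$ to obtain the $|\vec{a}_n^{[k]}|<\varepsilon/5$ criterion. The only noteworthy difference is that the paper also tracks the iteration immediately \emph{after} $e$ is used (the backward traversal of $e$), showing the gadgets resynchronize; this is not needed for the lemma as stated but is used later for the corollaries on iteration counts.
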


\begin{proof}
First observe that our slight modification of the cost of arc $(s_0,t_0)$ in both gadgets $N_{n}^{\vec{a}}$ and $N_{n}^{-\vec{a}}$ does not affect the behavior of the \SSPA{}. 
This is because the cost of any path in $G$ is perturbed by at most $\frac25 \varepsilon$, and hence the shortest path remains the same in every iteration. The only purpose of the modification is tie-breaking.
  
Consider the behavior of the \SSPA{} on the network $G_{\mathrm{ssp}}^{\vec{a}}$ with arc~$e$ removed.
 In each iteration, the shortest $s$-$t$-path goes via one of the two gadgets.
 By Lemma~\ref{lem:ssp_gadget}, each gadget can be in one of $2^n+1$ states and we number these states increasingly from $0$ to $2^n$ by the order of their appearance during the execution of the \SSPA{}.
 The shortest $s$-$t$-path through either gadget in state $j=0,\dots,2^n-1$ has a cost in the range $[j-A,j+A]$, and hence it is cheaper to use a gadget in state $j$ than the other gadget in state $j+1$.
  This means that after every two iterations both gadgets are in the same state.
  
  Now consider the network $G_{\mathrm{ssp}}^{\vec{a}}$ with arc $e$ put back.
  We show that, as before, if the two gadgets are in the same state before iteration $2j$, $j=0,\dots,2^n-1$, then they are again in the same state two iterations later.
  More importantly, arc $e$ is used in iterations $2j$ and $2j+1$ if and only if $\vec{a}_n^{[j]} = 0$.
  This proves the lemma since, by Proposition~\ref{prop:ak}, $\vec{a}_n^{[j]} = 0$ for some $j < 2^n$ if and only if the \noun{Partition} instance $\vec{a}$ has a solution.
  
  To prove our claim, assume that both gadgets are in the same state before iteration $2j$.
  Let $P^+$ be the shortest $s$-$t$-path that does not use any arc of $N_n^{-\vec{a}}$,  $P^-$ be the shortest $s$-$t$-path that does not use any arc of $N_n^{\vec{a}}$, and $P$ be the shortest $s$-$t$-path using arc $e$.
  Note that one of these paths is the overall shortest {$s$-$t$-path}.
  We distinguish two cases, depending on whether the arc $(s_0,t_0)$ currently carries flow~$0$ or~$1$ in both gadgets.
  
  If $(s_0,t_0)$ carries flow~$0$, then $P^+$, $P^-$ use arc $(s_0,t_0)$ in forward direction.
    Therefore, by Lemma~\ref{lem:ssp_gadget}, the cost of~$P^+$ is~$j + \vec{a}_n^{[j]} + \frac15 \varepsilon$, while the cost of~$P^-$ is~$j - \vec{a}_n^{[j]} + \frac15 \varepsilon$.
  On the other hand, path $P$ follows $P^+$ to node $s_0$ of $N_n^{\vec{a}}$, then uses arc $e$, and finally follows $P^-$ to $t$.
  The cost of this path is exactly~$j$.
  If $\vec{a}_n^{[j]} \neq 0$, then one of $P^+$, $P^-$ is cheaper than $P$, and the next two iterations augment flow along paths $P^+$ and $P^-$.
  Otherwise, if $\vec{a}_n^{[j]}=0$, then $P$ is the shortest path, followed in the next iteration by the path from $s$ to node $t_0$ of~$N_n^{-\vec{a}}$ along $P^-$, along arc $e$ in backwards direction to node $s_0$ of $N_n^{\vec{a}}$, and finally to $t$ along $P^+$, for a total cost of $j + \frac25 \varepsilon$. 
  
  If $(s_0,t_0)$ carries flow~$1$, then $P^+$, $P^-$ use arc $(s_0,t_0)$ in backward direction.
  By Lemma~\ref{lem:ssp_gadget}, the cost of $P^+$ is $j + \vec{a}_n^{[j]} - \frac15 \varepsilon$, while the cost of $P^-$ is $j - \vec{a}_n^{[j]} - \frac15 \varepsilon$.
  On the other hand, path $P$ follows $P^+$ to node $s_0$ of $N_n^{\vec{a}}$, then uses arc $e$, and finally follows $P^-$ to $t$.
  The cost of this path is $j - \frac25 \varepsilon$.
  If $\vec{a}_n^{[j]} \neq 0$, then one of $P^+$, $P^-$ is cheaper than $P$, and the next two iterations augment flow along paths $P^+$ and $P^-$.
  Otherwise, if $\vec{a}_n^{[j]}=0$, then $P$ is the shortest path, followed in the next iteration by the path from $s$ to node $t_0$ of $N_n^{-\vec{a}}$ along $P^-$, along arc~$e$ in backwards direction to node $s_0$ of $N_n^{\vec{a}}$, and finally to $t$ along $P^+$, for a total cost of $j$.
\end{proof}

\section{Omitted proofs of Section~\ref{sec:ns}}

\setcounter{lemma}{2}
\begin{lemma}
Let $S_{i}^{\vec{v},r}$, $\vec{v}\in\{\vec{a},-\vec{a}\}$,
be part of a larger network $G$ and assume that before every iteration
of the \NSA{} on $G$ where flow is routed through $S_{i}^{\vec{v},r}$
there is a tree-path from $t_{i}$ to $s_{i}$ in the residual network
of $G$ that has cost smaller than $-2^{i+1}$ and capacity greater than~$1$. Then, there are exactly $2x_{i}=3\cdot2^{i}-2$ iterations in which
one unit of flow is routed from $s_{i}$ to~$t_{i}$ along arcs of $S_{i}^{\vec{v},r}$.
Moreover:
\begin{enumerate}
\item In iteration $j=3k$, $k=0,\dots,2^{i}-1$, arc $(s_{0},t_{0})$ enters
the basis carrying flow $k\,\mathrm{mod}\,2$ and immediately exits the basis again
carrying flow $(k+1)\,\mathrm{mod}\,2$. The cost incurred by arcs of
$S_{i}^{\vec{v},r}$ is $k+\vec{v}_{i}^{[k]}$.
\item In iterations $j=3k+1,3k+2$, $k=0,\dots,2^{i}-2$, for some $0\leq i'\leq i$, the cost
incurred by arcs of $S_{i}^{\vec{v},r}$ is $k+r+\vec{v}_{i',i}^{[k]}$
and $k+(1-r)+\vec{v}_{i',i}^{[k]}$ in order of increasing cost. One
of the arcs $(s_{i'},s_{i'-1}),(s_{i'-1},t_{i'})$ and one of the arcs
$(s_{i'},t_{i'-1}),(t_{i'-1},t_{i'})$ each enter and leave the basis
in these iterations. 
\end{enumerate}
\end{lemma}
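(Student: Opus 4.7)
The plan is to prove the lemma by induction on $i$, simultaneously establishing all three facts: the total iteration count $2x_i$, the description of main iterations in item (1), and the description of intermediate iterations in item (2). The key enabling observation is that whenever flow passes through $S_i^{\vec{v},r}$, the hypothesized external tree-path from $t_i$ to $s_i$ of cost less than $-2^{i+1}$ dominates all cost contributions from inside the gadget, which are bounded in magnitude by $2^i$. Consequently, the arc entering the basis in each iteration is determined entirely by the local cost structure within $S_i^{\vec{v},r}$.

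For the base case $i=0$, the gadget is the single zero-cost unit-capacity arc $(s_0,t_0)$; combined with the external path it has the unique smallest reduced cost, so it flips from flow $0$ to flow $1$ (entering at its lower bound and immediately exiting at its upper bound), giving one iteration of cost $0=\vec{v}_0^{[0]}$, matching item (1) with $k=0$.

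For the inductive step, I would partition the $2x_i$ iterations into three regimes: \emph{Phase 1} ($2x_{i-1}$ iterations using the cheap arcs $(s_i,s_{i-1}),(t_{i-1},t_i)$), a two-iteration \emph{transition} at level $i'=i$, and \emph{Phase 2} ($2x_{i-1}$ iterations using the expensive arcs $(s_i,t_{i-1}),(s_{i-1},t_i)$). In Phase 1 the cheap arcs lie in the initial basis and, together with the external path, induce an effective external tree-path from $t_{i-1}$ to $s_{i-1}$ of cost less than $-2^i$; the induction hypothesis then governs the execution on $S_{i-1}^{\vec{v},r}$ verbatim, with each iteration's cost inside $S_i$ shifted by $v_i$. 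This shift yields main costs $k+\vec{v}_{i-1}^{[k]}+v_i=k+\vec{v}_i^{[k]}$ (using that the $i$-th bit of $k$ is zero for $k<2^{i-1}$) and intermediate costs $k+r+\vec{v}_{i',i}^{[k]}$ (analogously with $1-r$ in place of $r$), matching items (1) and (2). The transition, after $S_{i-1}^{\vec{v},r}$ has absorbed all its internal flow, replaces the still-unsaturated cheap arcs by the expensive arcs $(s_i,t_{i-1}),(s_{i-1},t_i)$ in two successive pivots at costs $2^{i-1}-r-\tfrac{1}{2}v_i$ and $2^{i-1}-(1-r)-\tfrac{1}{2}v_i$, which combined with a path through $S_{i-1}^{\vec{v},r}$ yield the item-(2) totals $(2^{i-1}-1)+r$ and $(2^{i-1}-1)+(1-r)$ for $k=2^{i-1}-1$ and $i'=i$. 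Phase 2 then mirrors Phase 1 with the subgadget's source and sink effectively swapped in the residual network; the expensive arcs' combined contribution $2^i-1-v_i$, together with the reversal shift $-(2^{i-1}-1)$, gives main costs $(2^i-1-v_i)+(k-2^{i-1}+\vec{v}_{i-1}^{[k-2^{i-1}]})-(2^{i-1}-1)=k+\vec{v}_i^{[k]}$ for $k\geq 2^{i-1}$, using $\vec{v}_{i-1}^{[k-2^{i-1}]}=\vec{v}_{i-1}^{[k]}$ and that the $i$-th bit of $k$ is one.

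The main obstacle I anticipate is establishing uniqueness of the entering arc in every iteration. This requires carefully bounding the partial sums $\vec{v}_{i',i}^{[k]}$ (all below $A<1/12$) against the additive gaps introduced by $r$ and $1-r$; the specific choice $r\in(2A,1-2A)$ with $r\neq 1/2$ is tuned so that $\min(r,1-r,|r-\tfrac{1}{2}|)>2A$, producing a positive reduced-cost gap. A secondary challenge is tracking the spanning tree explicitly through the intermediate iterations so that the arcs claimed to enter and leave can be matched to those in item (2); this calls for an inductive invariant describing the tree at every iteration, extending the pattern illustrated for $S_2^{\vec{v},r}$ in Figure~\ref{fig:ssp_example}.
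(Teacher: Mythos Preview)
Your overall strategy matches the paper's: induction on $i$ with a three-phase decomposition (Phase~1 through the cheap arcs $(s_i,s_{i-1}),(t_{i-1},t_i)$, a two-pivot transition at level $i'=i$, Phase~2 through the expensive arcs with the roles of $s_{i-1}$ and $t_{i-1}$ swapped), and the cost bookkeeping you sketch for the main iterations is exactly what the paper does. Two points, however, need repair.

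First, the two transition pivots do \emph{not} route through $S_{i-1}^{\vec{v},r}$. After $2x_{i-1}$ iterations the inner gadget is saturated, so the next pivot brings $(s_{i-1},t_i)$ into the basis and pushes one unit along $s_i,s_{i-1},t_i$ (one cheap arc plus one expensive arc), saturating $(s_i,s_{i-1})$; the following pivot is the symmetric $s_i,t_{i-1},t_i$. The internal cost is $\tfrac12 v_i + \bigl(2^{i-1}-(1-r)-\tfrac12 v_i\bigr)=(2^{i-1}-1)+r$, respectively $(2^{i-1}-1)+(1-r)$, with $i'=i$ so that $\vec{v}_{i',i}^{[k]}=0$; your final numbers are right but the stated mechanism is not.

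Second, and more consequential for the induction, after the cost shift and the swap $\tilde s_{i-1}:=t_{i-1}$, $\tilde t_{i-1}:=s_{i-1}$, the residual inner gadget is a fresh copy of $S_{i-1}^{\vec{v},\,1-r}$, not $S_{i-1}^{\vec{v},r}$: the asymmetry between the two expensive arcs interchanges $r$ and $1-r$. The paper therefore carries the induction hypothesis for \emph{both} parameter values simultaneously (``assume our claim holds for $S_{i-1}^{\vec{v},r}$ and $S_{i-1}^{\vec{v},1-r}$''). Without strengthening your hypothesis in this way, your Phase~2 appeal to induction does not apply. Once you incorporate these two fixes, your argument coincides with the paper's.
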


\begin{proof}
First observe that throughout the execution of the \NSA{} on~$G$, one unit of flow must always be routed along both of the paths $s_{i},s_{i-1},\dots,s_{0}$ and $t_{0},t_{1},\dots,t_{i}$.
This is because there is an initial flow of one along these paths, all of $s_{0},\dots,s_{n-1}$ have in-degree~$1$, and all of $t_{0},\dots,t_{n-1}$ have out-degree~$1$, which means that the flow cannot be rerouted. 

We prove the lemma by induction on $i>0$, together with the additional
property, that after $2x_{i}$ iterations the arcs in $S_{i-1}^{\vec{v},r}$
carry their initial flow values, while the arcs in $S_{i}^{\vec{v},r}\setminus S_{i-1}^{\vec{v},r}$
all carry $x_{i}$ additional units of flow (which implies that $(s_{i},s_{i-1})$
and $(t_{i-1},t_{i})$ are saturated). Also, the configuration of
the basis is identical to the initial configuration, except that the
membership in the basis of arcs in $S_{i}^{\vec{v},r}\setminus S_{i-1}^{\vec{v},r}$
is inverted. In the following, we assume that $r\in(2A,1/2)$,
the case where $r\in(1/2,1-2A)$ is analogous. In each iteration $j$,
let $P_{j}$ denote the tree-path outside of~$S_{i}^{\vec{v},r}$ from $t_{i}$ to $s_{i}$ of cost
$c_{j}<-2^{i+1}$ and capacity greater than~$1$.

For $i=1$, the \NSA{} performs the following four iterations
involving $S_{1}^{\vec{v},r}$ (cf.~Figure~\ref{fig:ssp_example} for and illustration embedded in $S_{2}^{\vec{v},r}$). In the first iteration, $(s_{0},t_{0})$
enters the basis and one unit of flow is routed along the cycle $s_{1},s_{0},t_{0},t_{1},P_{0}$
of cost $v_{1}+c_{0}=\vec{v}_{1}^{[0]}+c_{0}$. This saturates
arc $(s_{0},t_{0})$ which is the unique arc to become tight (since
$P_{0}$ has capacity greater than~$1$) and thus exits the basis again. In the second
iteration, $(s_{0},t_{1})$ enters the basis and one unit of flow
is routed along the cycle $s_{1},s_{0},t_{1},P_{1}$ of cost $r+c_{1}=r+\vec{v}_{1,1}^{[0]}+c_{1}$,
thus saturating (together with the initial flow of~$1$) arc $(s_{0},s_{1})$
of capacity $x_{1}+1=3$. Since $P_{1}$ has capacity greater than~$1$, this
is the only arc to become tight and it thus exits the basis. In the
third iteration, $(s_1,t_{0})$ enters the basis and one unit of
flow is routed along the cycle $s_{1},t_{0},t_{1},P_{2}$ of cost $(1-r)+c_{2}=(1-r)+\vec{v}_{1,1}^{[0]}+c_{2}$.
Similar to before, $(t_{0},t_{1})$ is the only arc to become tight and thus
exits the basis. In the fourth and final iteration, $(s_{0},t_{0})$ enters
the basis and one unit of flow is routed along the cycle $s_{1},t_{0},s_{0},t_{1},P_{3}$
of cost $1-v_{1}+c_{3}=\vec{v}_{1}^{[1]}+c_{3}$, which causes $(s_{0},t_{0})$
to become empty and leave the basis. Thus, after four iterations, arc $(s_0,t_0)$
in $S_{0}^{\vec{v},r}$ carries its initial flow of value~$0$, while the arcs
in $S_{1}^{\vec{v},r}\setminus S_{0}^{\vec{v},r}$ all carry $2=x_{1}$
additional units of flow. Also, the arcs $(s_{0},t_{1}),(s_{1},t_{0})$
replaced the arcs $(s_{1},s_{0}),(t_{0},t_{1})$ in the basis. 

To see, for $i>0$, that $S_{i}^{\vec{v},r}$ is saturated after $2x_{i}$
units of flow have been routed from $s_{i}$ to $t_{i}$, consider
the directed $s_{i}$-$t_{i}$-cut in $S_{i}^{\vec{v},r}$ induced
by $\{s_{i},t_{i-1}\}$ containing the arcs $(s_{i},s_{i-1})$,
$(t_{i-1},t_{i})$. The capacity of this cut is exactly $2x_{i}+2$
and the initial flow over the cut is~$2$.

Now assume our claim holds for $S_{i-1}^{\vec{v},r}$ and $S_{i-1}^{\vec{v},1-r}$
and consider $S_{i}^{\vec{v},r}$. Consider the first $2x_{i-1}$ iterations $j=0,\dots,2x_{i-1}-1$ and set
$k:=\left\lfloor j/3\right\rfloor <2^{i-1}$. It can be seen inductively that
the shortest path from $t_{i-1}$ to $s_{i-1}$ in the bidirected network associated
with $S_{i-1}^{\vec{v},r}$ has cost at least $-2^{i-1}+1-A>-2^{i-1}+1-r$. 
Hence, every path from $s_{i}$ to~$t_{i}$ using either or both of the arcs $(s_{i},t_{i-1})$ or $(s_{i-1},t_{i})$ has cost greater than $2^{i-1}-(1-r)-A>2^{i-1}-1+A$.
By induction, we can thus infer that none of these arcs
enters the basis in iterations $j<2x_{i-1}$, and instead an arc of
$S_{i-1}^{\vec{v},r}$ enters (and exits) the basis and one unit of
flow gets routed from $s_{i}$ to $t_{i}$ via the arcs $(s_{i},s_{i-1})$,
$(t_{i-1},t_{i})$. We may use induction here since, before iteration $j$, 
the path $t_{i-1},t_{i},P_{j},s_{i},s_{i-1}$
has cost $v_{i}+c_{j}<v_{i}-2^{i+1}<-2^{i}$ and its capacity is greater
than~$1$, since both $(s_{i},s_{i-1})$, $(t_{i-1},t_{i})$
have capacity $x_{i}+1=2x_{i-1}+2$, leaving one unit of spare capacity
even after a flow of $2x_{i-1}$ has been routed along them in addition
to the initial unit of flow. The additional cost contributed by arcs
$(s_{i},s_{i-1}),(t_{i-1},t_{i})$ is $v_{i}$, which is in accordance
with our claim since $\vec{v}_{\ell,i-1}^{[k]}+v_{i}=\vec{v}_{\ell,i}^{[k]}$
for all $\ell\in\{0,\dots,i-1\}$ and $k\in\{0,\dots,2^{i-1}-1\}$.

Because $S_{i-1}^{\vec{v},r}$ is fully saturated after $2x_{i-1}$
iterations, in the next iteration $j=2x_{i-1}=3\cdot2^{i-1}-2$, $k:=\left\lfloor j/3\right\rfloor =2^{i-1}-1$,
arc $(s_{i-1},t_{i})$ is added to the basis and one unit of flow is sent
along the path $s_{i},s_{i-1},t_{i}$, thus saturating the capacity
$x_{i}+1=2x_{i-1}+2$ of arc $(s_{i},s_{i-1})$ and incurring a
cost of $2^{i-1}-(1-r)=k+r+\vec{v}_{i,i}^{[k]}$. Note that this cost
is higher than the cost of each of the previous iterations. The saturated
arc has to exit the basis since, by assumption, $P_{j}$ has capacity
greater than~$1$. Similarly, in the following iteration $j=2x_{i-1}+1=3\cdot2^{i-1}-1$,
$k:=\left\lfloor j/3\right\rfloor =2^{i-1}-1$, the cost is $2^{i-1}-r=k+(1-r)+\vec{v}_{i,i}^{[k]}$
and arc $(t_{i-1},t_{i})$ is replaced by $(s_{i},t_{i-1})$ in the basis.

By induction, at this point $(s_{i-2},t_{i-1})$ and $(s_{i-1},t_{i-2})$
are in the basis, the arcs of $S_{i-1}^{\vec{v},r}\setminus S_{i-2}^{\vec{v},r}$
carry a flow of $x_{i-1}$ in addition to their initial flow, and
$S_{i-2}^{\vec{v},r}$ is back to its initial configuration. To be able to apply induction on the residual network of $S_{i-1}^{\vec{v},r}$, we shift
the costs of the arcs at $s_{i-1}$ by $-(2^{i-2}-r)$ and the costs
of the arcs at $t_{i-1}$ by $-(2^{i-2}-(1-r))$ in the residual network
of $S_{i-1}^{\vec{v},r}$. Since we shift costs uniformly across cuts,
this only affects the costs of paths but not the structural behavior
of the gadget. Specifically, the costs of all paths from $t_{i-1}$ to
$s_{i-1}$ in the residual network are increased by exactly $2^{i-1}-1$.
If we switch roles of $s_{i-1}$ and $t_{i-1}$, say $\tilde{s}_{i-1}:=t_{i-1}$
and $\tilde{t}_{i-1}:=s_{i-1}$, we obtain the residual network
of $S_{i-1}^{\vec{v},1-r}$ with its initial flow. This allows us to
use induction again for the next $2x_{i-1}$ iterations.

To apply the induction hypothesis, we need the tree-path from
$\tilde{t}_{i-1}=s_{i-1}$ to $\tilde{s}_{i-1}=t_{i-1}$ to maintain
cost smaller than $-2^{i}$ and capacity greater than~$1$. This is
fulfilled since $P_{j}$ has cost smaller than $-2^{i+1}$, which is
sufficient even with the additional cost of $2^{i}-1-v_{i}$ incurred
by arcs $(s_{i},\tilde{s}_{i-1})$, $(\tilde{t}_{i-1},t_{i})$. The
residual capacity of $(t_{i},\tilde{t}_{i-1})$ and $(\tilde{s}_{i-1},s_{i})$
is $x_{i}>2x_{i-1}$ and thus sufficient as well. By induction for
$S_{i-1}^{\vec{v},1-r}$, we may thus conclude that in iterations $j=2x_{i-1}+2,\dots,2x_{i}-1$, $k:=\left\lfloor j/3\right\rfloor \geq2^{i-1}$,
one unit of flow is routed via $(s_{i},t_{i-1}),S_{i-1}^{\vec{v},r},(s_{i-1},t_{i})$.
The cost of $(s_{i},\tilde{s}_{i-1})$ and $(\tilde{t}_{i-1},t_{i})$
together is $2^{i}-1-v_{i}$. The cost of iteration $j'=j-2x_{i-1}-2$,
$k':=\left\lfloor j'/3\right\rfloor =k-2^{i-1}$, in $S_{i-1}^{\vec{v},1-r}$
is $k'+y+\vec{v}_{\ell,i-1}^{[k']}$, for $y\in\{0,r,(1-r)\}$ and $\ell\in\{0,\dots,i-1\}$
chosen according to the different cases of the lemma. Accounting for
the shift by $2^{i-1}-1$ of the cost compared with the residual network
of $S_{i-1}^{\vec{v},r}$, the incurred total cost in $S_{i-1}^{\vec{v},r}$
is 
\begin{align*}
 (2^{i}-1-v_{i})&+(k'+y+\vec{v}_{\ell,i-1}^{[k']})-(2^{i-1}-1)\\
 & =2^{i-1}+k'+y-v_{i}+\vec{v}_{\ell,i-1}^{[k']} 
 =k+y+\vec{v}_{\ell,i}^{[k]},
\end{align*}
where we used $-v_{i}+\vec{v}_{\ell,i-1}^{[k']}=\vec{v}_{\ell,i}^{[k'+2^{i-1}]}$
since $k'<2^{i-1}$. This concludes the proof.
\end{proof}

\setcounter{lemma}{3}

\begin{lemma}
Arc $e$ enters the basis in some iteration of the \NSA{} on network $G_{\mathrm{ns}}^{\vec{a}}$ if and only if the \noun{Partition} instance $\vec{a}$ has a solution.
\end{lemma}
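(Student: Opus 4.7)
The proof mirrors that of Lemma~\ref{lem:ssp_construction} for the \SSPA{}, with the counting gadget Lemma~\ref{lem:ns_gadget} playing the role of Lemma~\ref{lem:ssp_gadget} and an additional case analysis needed because the \NSA{}'s pivot depends on the entire spanning tree rather than on a single shortest path.

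\emph{Step 1: Verifying the hypothesis of Lemma~\ref{lem:ns_gadget}.} The plan is to show that the overflow arc $(s,t)$, which has initial flow $4x_n$ and cost $2^{n+1}$, remains in the basis throughout: no cycle in the \NSA{}'s pivots can be cheaper by pushing flow backwards along it, because every other arc in $G_{\mathrm{ns}}^{\vec{a}}$ has cost at most $2^{n+1} - O(1)$ and no path from $s$ to $t$ in the rest of the network is cheaper than a level-$k$ pivot for any $k < 2^n$. Hence the residual arc $(t,s)$ of cost $-2^{n+1}$ is always in the tree; combined with the bold tree arcs $(s,s_n^\pm)$ and $(t_n^\pm,t)$ it yields, for each gadget, a tree-path from $t_n^\pm$ to $s_n^\pm$ of cost strictly below $-2^{n+1}$ and capacity greater than $1$, so Lemma~\ref{lem:ns_gadget} describes the internal pivots of each gadget.

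\emph{Step 2: Lockstep progression when $e$ is ignored.} With arc $e$ temporarily removed, by Lemma~\ref{lem:ns_gadget} each gadget performs, in each level $k\in\{0,\dots,2^n-1\}$, a pivot of type~(1) with gadget-internal cost $k + \vec{a}_n^{[k]} + \tfrac{1}{5}\varepsilon$ for the positive gadget and $k - \vec{a}_n^{[k]} + \tfrac{1}{5}\varepsilon$ for the negative gadget (the $\tfrac{1}{5}\varepsilon$ stemming from the perturbations on $(c^+,t_0^+)$ and $(s_0^-,c^-)$), flanked by two type-(2) pivots of costs near $k+\tfrac{1}{3}$ and $k+\tfrac{2}{3}$. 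Adding the common external offset $-2^{n+1}$ from the tree-path of Step~1, the full reduced costs of level-$k$ pivots in the two gadgets differ by at most $2A < 1/6$, and all level-$k$ pivots strictly precede all level-$(k+1)$ pivots. Consequently, the \NSA{} interleaves the two gadgets so that both complete level $k$ (in particular, both type-(1) pivots) before proceeding to level $k+1$; at the start of each level~$k$ the correspondence stated after Lemma~\ref{lem:ns_gadget} is in force in both gadgets simultaneously.

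\emph{Step 3: Comparing arc $e$ against the level-$k$ pivots.} When $e$ is reinstated, its reduced cost equals the cost of the cycle closing $e=(c^+,c^-)$ through the current spanning tree; this cycle decomposes into a portion inside $S_n^{-\vec{a},1/3}$ from $c^-$ to $s_n^-$ or $t_n^-$, the external residual path using $(t_n^-,t)$, $(t,s)$, $(s,s_n^+)$, and a portion inside $S_n^{\vec{a},1/3}$ from $s_n^+$ or $t_n^+$ back to $c^+$. A case analysis over the four possible in-tree configurations at $c^+$ and $c^-$ (whether the arc $(s_0^+,c^+)$ or $(c^+,t_0^+)$ is in the basis, and likewise for the negative gadget), parallel to the two-case analysis in the proof of Lemma~\ref{lem:ssp_construction}, shows that at the start of each of the two level-$k$ type-(1) iterations the reduced cost of $e$ differs from the smaller of the two competing pivot reduced costs by an amount of the form $\pm 2\vec{a}_n^{[k]} \pm \tfrac{2}{5}\varepsilon$, arranged so that $e$ has strictly smaller reduced cost exactly when $\vec{a}_n^{[k]} = 0$ and strictly larger otherwise; and $e$ is never competitive in the intermediate iterations (since the latter have cost roughly $\tfrac{1}{3}$ or $\tfrac{2}{3}$ more than the surrounding type-(1) pivots while the reduced cost of $e$ changes by at most $\tfrac{2}{5}\varepsilon$ between them).

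\emph{Conclusion and main obstacle.} By Proposition~\ref{prop:ak}, such a $k\in\{0,\dots,2^n-1\}$ with $\vec{a}_n^{[k]}=0$ exists if and only if the \noun{Partition} instance $\vec{a}$ has a solution; in the positive case $e$ wins the pivot at the first such level $k$, while in the negative case $e$ is never a minimum-reduced-cost arc and hence never enters the basis. The main obstacle is the bookkeeping in Step~3: the four possible in-tree configurations around $c^\pm$ give four distinct expressions for the reduced cost of $e$ in terms of tree paths through the two gadgets, and the $\tfrac{1}{5}\varepsilon$ perturbations on $(c^+,t_0^+)$ and $(s_0^-,c^-)$ must be checked to cancel (or reinforce) the $\pm\vec{a}_n^{[k]}$ contributions exactly in the way needed to make $e$ the unique minimum-reduced-cost arc precisely when $\vec{a}_n^{[k]}=0$.
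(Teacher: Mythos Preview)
Your plan follows the same outline as the paper's proof: establish the external tree-path needed for Lemma~\ref{lem:ns_gadget}, argue lockstep progression of the two gadgets, and then compare the reduced cost of $e$ against the competing pivots. The paper organises the decisive case analysis differently, however. Rather than splitting on the four in-tree configurations at~$c^{\pm}$, it groups iterations into blocks of twelve (states $12k$ through $12k+12$), works explicitly at the level of the $S_1^{\pm\vec{a},1/3}$ subgadgets with their four internal paths of costs $\tfrac{1}{5}\varepsilon,\tfrac{1}{3},\tfrac{2}{3},1-\tfrac{1}{5}\varepsilon$, and cases on the \emph{sign} of $\vec{a}_n^{[4k]}$ (three cases). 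The w.l.o.g.\ assumption $a_1=0$ from the construction---giving $\vec{a}_n^{[2k]}=\vec{a}_n^{[2k+1]}$---is what makes this three-case split clean, since the two consecutive type-(1) pivots at levels $4k$ and $4k{+}1$ then behave identically with respect to~$e$; your four-configuration split would have to track the alternation of the $(s_0,t_0)$ flow across these two pivots by hand. The paper also checks that when $e$ does enter it immediately leaves (carrying flow $1/2$) and the two gadgets remain synchronised afterwards, so the block structure persists; you sidestep this by stopping at the first entry of~$e$, which suffices for the lemma as stated but not for the iteration-count corollary. One minor inaccuracy in your Step~1: the external tree-path from $t_n^{\pm}$ to $s_n^{\pm}$ has cost exactly $-2^{n+1}$, not strictly below it; the strict inequality in Lemma~\ref{lem:ns_gadget} is a convenient slack rather than a tight requirement.
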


\begin{proof}
First observe that $\vec{a}_n^{[2k]}=\vec{a}_n^{[2k+1]}$ for $k\in {0,\dots,2^{n-1}}$ since, by assumption, $a_1 = 0$.
  
Similar to the proof of Lemma~\ref{lem:ssp_construction}, in isolation each of the two gadgets can be in one of $2x_n$ states (Lemma~\ref{lem:ns_gadget}), which we label by the number of iterations needed to reach each state.
  Assuming that both gadgets are in state $12k$ after some number of iterations, we show that both gadgets will reach state $12k+12$ together as well.
  In addition, we show that, in the iterations in-between, arc $e$ enters the basis if and only if $\vec{a}_n^{[4k]} = 0$ and thus $\vec{a}_n^{[4k+1]} = 0$, or $\vec{a}_n^{[4k+2]} = 0$ and thus $\vec{a}_n^{[4k+3]} = 0$.
  Consider the situation where both gadgets are in state $12k$.
  Note that in this state the arcs in $S_1^{\vec{v},1/3}$ and $S_1^{-\vec{v},1/3}$ are back in their original configuration.
  
  Let $P^{\pm}$ denote the tree-path from $t_1^{\pm}$ to $s_1^{\pm}$, and let $P^{\pm\mp}$ denote the tree-path from $t_1^{\mp}$ to $s_1^{\pm}$.
  We refer to these paths as the \emph{outer} paths.
  Observe that, since the gadgets are in the same state, the costs of the outer paths differ by at most $A<1/4$.
  In the next iterations, flow is sent along a cycle containing one of the outer paths, and we analyze only the part of each cycle without the outer path.
  Let $P_0^{\pm},P_1^{\pm},P_2^{\pm},P_3^{\pm}$ be the four successive shortest paths within the gadget $S_1^{\pm\vec{a},1/3}$.
  The costs of these paths are $\frac15\eps$, $1/3$, $2/3$, $1-\frac15\eps$, respectively. 
  Note that, since $A < 1/6$, the costs of the paths stay in the same relative order within each gadget throughout the algorithm. 
  
  If $\vec{a}_n^{[4k]} < 0$, then $P^+$ is the cheapest of the outer paths by a margin of more than $\varepsilon / 2$.
  Thus, in the first iteration, $(c^+,t_0^+)$ replaces $(s_0^+,c^+)$ in the basis closing the path $P_0^+$.
  In the next five iterations, the paths $P_0^-$, $P_1^+$, $P_1^-$, $P_2^+$, $P_2^-$ are closed in this order.
  The final two iterations are $P_3^+$, $P_3^-$, similar to the first two iterations, as $\vec{a}_n^{[4k + 1]} = \vec{a}_n^{[4k]} < 0$.
  At this point, $8$ iterations have passed and both gadgets are in state $12k+6$.
  
  If $\vec{a}_n^{[4k]} > 0$, then $P^-$ is the cheapest of the outer paths by a margin of more than $\varepsilon / 2$.
  Thus, the first iteration closes the path $P_0^-$.
  The next five iterations are via $P_0^+$, $P_1^{-}$, $P_1^{+}$, $P_2^-$, $P_2^+$, in this order.
  The final two iterations are $P_3^-$, $P_3^+$, similar to the first two iterations, as $\vec{a}_n^{[4k + 1]} = \vec{a}_n^{[4k]} > 0$.
  At this point, $8$ iterations have passed and both gadgets are in state $12k+6$.
  
  If $\vec{a}_n^{[4k]} = 0$, then all four outer paths have the same cost.
  The first iteration is via the path $s_1^+,s_0^+,c^+,c^-,t_0^-,t_1^-$, \ie, arc $e$ enters and leaves the basis, for a cost of $0$ and an additional flow of $1/2$.
  The next two iterations are via $P_1^{\pm}$, each for a cost of $\frac15\eps$ and an additional flow of $1/2$. 
  The fourth iteration is via the path $s_1^-,s_0^-,c^-,c^+,t_0^+,t_1^+$, \ie, arc $e$ enters and leaves the basis again, for a cost of $\frac25\eps$ and an additional flow of $1/2$.
  The next iterations are as before: via $P_1^+$, $P_1^-$, $P_2^-$, $P_2^+$, in this order.
  The final four iterations are similar to the first four iterations, again twice using $e$, as $\vec{a}_n^{[4k + 1]} = \vec{a}_n^{[4k]} = 0$.
  At this point, $12$ iterations have passed and both gadgets are in state $12k+6$.
  
  The next four iterations (two for each gadget) do not involve the subnetworks $S_1^{\vec{a},1/3}$ and $S_1^{-\vec{a},1/3}$, and do thus not use $e$.
  The iterations going from state $12k+6$ to state $12k+12$ are analogous to the above if we exchange the roles of $s_1^{\pm}$ and~$t_1^{\pm}$. This concludes the proof.
\end{proof}

\section{Omitted proofs of Corollaries}\label{sec:corollaries}

\setcounter{corollary}{0}

\begin{corollary}
Determining the number of iterations needed by the \SA{}, the \NSA{}, and the \SSPA{} for a given input is \noun{NP}-hard.
\end{corollary}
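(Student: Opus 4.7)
The plan is to reduce Partition to determining the exact number of iterations, using the constructions of Sections~\ref{sec:ssp} and~\ref{sec:ns} (with a small modification for the SSPA). Since the reductions in those sections are already polynomial, all that is needed is to show that the iteration count of each algorithm depends on whether the Partition instance $\vec{a}$ admits a solution.

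For the NSA (and hence, as a special case of the \SA{} applied to minimum-cost flow LPs, for the SA): the analysis in the proof of Lemma~\ref{lem:ns_construction} can be made quantitative. Going from paired state $12k$ to state $12k+6$ takes $8$ iterations when $\vec{a}_n^{[4k]} \neq 0$ but $12$ iterations when $\vec{a}_n^{[4k]} = 0$, with an analogous dichotomy for $12k+6 \to 12(k+1)$ depending on $\vec{a}_n^{[4k+2]}$, separated by $4$ fixed transition iterations. The extra $4$ iterations in the zero case are precisely the two instances in which arc~$e$ enters and immediately leaves the basis. Summed over all~$k$, the total iteration count of the NSA on $G_{\mathrm{ns}}^{\vec{a}}$ equals a base value $N_0(n)$ plus $4 \cdot |\{j \in \{0,2,\dots,2^n-2\} : \vec{a}_n^{[j]} = 0\}|$. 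By Proposition~\ref{prop:ak} (together with the assumption $a_1 = 0$, which forces $\vec{a}_n^{[2k]} = \vec{a}_n^{[2k+1]}$), this quantity strictly exceeds $N_0(n)$ if and only if Partition admits a solution, so an oracle for the NSA's iteration count decides Partition in polynomial time.

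For the SSPA, the unmodified construction of Section~\ref{sub:ssp_construction} has the constant iteration count $2^{n+1}$, since every iteration augments exactly one unit of $s$-$t$ flow and the max flow value is fixed by the outer capacities independently of~$\vec{a}$. We therefore augment $G_{\mathrm{ssp}}^{\vec{a}}$ by a small auxiliary component attached to arc~$e$ that forces one additional augmenting iteration whenever flow traverses~$e$, for instance by subdividing~$e$ with an intermediate node carrying a unit demand servable only along a detour through~$e$. With this augmentation in place, the iteration count becomes $2^{n+1}$ plus a positive quantity exactly when Partition has a solution. The main obstacle is designing the auxiliary component to be transparent to the analysis of Lemma~\ref{lem:ssp_construction}: it must not create cheap spurious detours that reorder the shortest paths driving the main construction, while still producing a deterministic additional iteration on every crossing of~$e$. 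Once this modification is verified, the reduction is polynomial and yields NP-hardness for the SSPA as well.
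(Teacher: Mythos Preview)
Your treatment of the \NSA{} (and hence the \SA{}) is essentially the paper's argument: the proof of Lemma~\ref{lem:ns_construction} already counts iterations, and the extra pivots in the $\vec{a}_n^{[4k]}=0$ case make the total exceed the base value $4x_n$ precisely when \noun{Partition} has a solution.

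For the \SSPA{}, however, you have a genuine gap. You correctly observe that the unmodified $G_{\mathrm{ssp}}^{\vec{a}}$ always takes exactly $2^{n+1}$ iterations, and that some modification is needed. But your proposed fix --- subdividing~$e$ and placing a unit demand at the new node --- does not work as stated: if \noun{Partition} has no solution, arc~$e$ is never used, and then that demand cannot be served at all (the instance becomes infeasible rather than merely taking fewer iterations). You acknowledge that ``the main obstacle is designing the auxiliary component'' and that it must still be ``verified'', which leaves the reduction incomplete.

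The paper's fix is much simpler and avoids all of this: replace~$e$ by two parallel arcs, each of capacity~$1/2$ and with slightly perturbed costs. When~$e$ would not have been used, nothing changes. When~$e$ would have carried one unit of flow in a single iteration, the two half-capacity arcs (with distinct costs) force that unit to be sent in two separate augmenting iterations instead of one. Thus the \SSPA{} takes strictly more than $2^{n+1}$ iterations on the modified network if and only if \noun{Partition} has a solution, and the analysis of Lemma~\ref{lem:ssp_construction} carries over essentially unchanged because the perturbation is below the $\varepsilon/5$ scale already used there.
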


\begin{proof}
We first show that determining the number of iterations needed by the \SSPA{} for a given minimum-cost flow instance is \noun{NP}-hard. We replace the arc $e$ in $G_{\mathrm{ssp}}^{\vec{a}}$ of Section~\ref{sec:ssp} by two parallel arcs, each with a capacity of $1/2$ and slightly perturbed costs. This way, every execution of the \SSPA{} that previously did not use arc $e$ is unaffected, while executions using $e$ require additional iterations. Thus, by Lemma~\ref{lem:ssp_construction}, the \SSPA{} on network $G_{\mathrm{ssp}}^{\vec{a}}$ takes more than $2^{n+1}$ iterations if and only if the \noun{Partition} instance $\vec{a}$ has a solution.

The proof for the \NSA{} (and thus the \SA{}) follows from the proof of Lemma~\ref{lem:ns_construction}, observing that the \NSA{} takes more than $4x_n$ iterations for network $G_{\mathrm{ns}}^{\vec{a}}$ if and only if the \noun{Partition} instance $\vec{a}$ has a solution.
\end{proof}

\begin{corollary}
Deciding for a given linear program whether a given variable ever enters the basis during the execution of the \SA{} is \noun{NP}-hard.
\end{corollary}

\begin{proof}
The proof is immediate via Lemma~\ref{lem:ns_construction} and the fact that \noun{Partition} is \noun{NP}-hard.
\end{proof}

\begin{corollary}
Determining whether a parametric minimum-cost flow uses a given arc (\ie, assigns positive flow value for any parameter value) is \noun{NP}-hard. In particular, determining whether the solution to a parametric linear program uses a given variable is \noun{NP}-hard. Also, determining the number of different basic solutions over all parameter values is \noun{NP}-hard.
\end{corollary}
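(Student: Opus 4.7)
The plan is to exploit the classical correspondence between the \SSPA{} and parametric minimum-cost flow. When the required $s$-$t$-flow value is treated as a parameter $\lambda \geq 0$, the family of optimal basic solutions traces out a piecewise linear trajectory whose breakpoints are exactly the \SSPA{} iterations: starting with the zero flow (a basic solution) and augmenting along successive residual shortest paths exchanges one arc in and out of the basis, yielding the next basic solution valid on the next linear piece of the parametric trajectory. Consequently, an arc carries positive flow for some value of $\lambda$ if and only if the \SSPA{} ever augments along it, and the total number of distinct basic solutions encountered as $\lambda$ ranges over $[0,\infty)$ equals the number of \SSPA{} iterations plus one.

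Given this correspondence, the first two statements are immediate. I would apply the parametric formulation to the network $G_{\mathrm{ssp}}^{\vec{a}}$ from Section~\ref{sec:ssp}, with distinguished arc~$e$, parametrizing on the supply at $s$ and demand at $t$. By Lemma~\ref{lem:ssp_construction}, arc $e$ is used by the \SSPA{} in some iteration if and only if the \noun{Partition} instance $\vec{a}$ has a solution, so the same holds for whether $e$ carries positive flow for some value of $\lambda$. Since minimum-cost flow is a linear program in which the flow value appears as a right-hand side coefficient, the parametric LP statement follows as a direct specialization.

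For the third statement, I would reuse the modified network already introduced in the proof of Corollary~\ref{cor:ssp_iterations}: replace arc $e$ by two parallel arcs, each of capacity $1/2$ and with slightly perturbed costs. Every \SSPA{} execution that previously avoided $e$ still performs exactly $2^{n+1}$ iterations, while each use of $e$ is now split into two consecutive augmentations along the parallel copies, producing strictly more iterations. Hence, by Lemma~\ref{lem:ssp_construction}, the number of distinct basic solutions in the parametric trajectory equals $2^{n+1}+1$ if $\vec{a}$ has no \noun{Partition} solution and is strictly larger otherwise. Deciding this count therefore solves \noun{Partition}.

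The main subtlety is making the correspondence rigorous: one must argue that each \SSPA{} iteration produces a genuinely new basic solution (no degeneracy) and that the optimal basic solutions for intermediate values of $\lambda$ are precisely those visited by the \SSPA{}. This requires a unique shortest residual path in every iteration, which is already ensured by the tie-breaking cost perturbation of size $\varepsilon/5$ on the arcs $(s_0,t_0)$ built into $G_{\mathrm{ssp}}^{\vec{a}}$ (and can be strengthened by arbitrary small generic perturbations of the other arc costs without affecting Lemma~\ref{lem:ssp_construction}). Beyond verifying this non-degeneracy, the proof reduces to a direct invocation of Lemma~\ref{lem:ssp_construction} and its modification from Corollary~\ref{cor:ssp_iterations}.
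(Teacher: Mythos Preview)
Your proposal is correct and follows essentially the same approach as the paper: invoke the classical correspondence between the \SSPA{} and the parametric minimum-cost flow problem, then appeal to Lemma~\ref{lem:ssp_construction} for the arc/variable statements and to Corollary~\ref{cor:ssp_iterations} (or its parallel-arc construction) for the count of basic solutions. The paper's proof is a one-line citation of exactly these ingredients; you have simply unpacked the correspondence and the non-degeneracy caveat more explicitly.
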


\begin{proof}
This follows from the fact that the \SSPA{} solves a parametric minimum-cost flow problem, together with Lemma~\ref{lem:ssp_construction} and Corollary~\ref{cor:ssp_iterations}.
\end{proof}

\begin{corollary}
Given a $d$-dimensional polytope~$P$ by a system of linear inequalities, determining the number of vertices of $P$'s projection onto a given $2$-dimensional subspace is \noun{NP}-hard.
\end{corollary}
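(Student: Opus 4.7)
The plan is to reduce from the \noun{NP}-hard problem established in Corollary~\ref{cor:parametric}: determining the number of distinct optimal basic solutions of a parametric minimum-cost flow problem over all parameter values. Given such a parametric instance, let $z^{*}(\lambda)$ denote the optimal cost as a function of the parameter $\lambda\in[0,\Lambda]$; this function is convex and piecewise linear, and the number of its linear pieces equals the number of distinct optimal bases, which is therefore \noun{NP}-hard to compute.

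Given the parametric LP $\min c^{T}x$ subject to $Ax=\lambda d$, $0\le x\le u$, I would construct the polytope
\[
P \;=\; \{(x,\lambda,z)\in\BR^{|E|}\times\BR\times\BR \;:\; Ax=\lambda d,\ 0\le x\le u,\ 0\le\lambda\le\Lambda,\ c^{T}x\le z\le Z\},
\]
where $Z$ is a polynomial-size upper bound on the cost of any feasible flow (e.g.\ the sum of $|c_{e}|u_{e}$ over all arcs, plus one). All defining inequalities are linear with polynomially bounded encoding length, so $P$ is a polytope specified by a polynomial-size system of linear inequalities.

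Next, I would verify that the projection $\pi(P)$ of $P$ onto the $2$-dimensional subspace spanned by the $\lambda$- and $z$-axes equals
\[
\pi(P) \;=\; \{(\lambda,z) \;:\; 0\le\lambda\le\Lambda,\ z^{*}(\lambda)\le z\le Z\},
\]
since a feasible $x$ with $Ax=\lambda d$, $0\le x\le u$, $c^{T}x\le z$ exists if and only if $\lambda$ is a feasible flow value and $z\ge z^{*}(\lambda)$. The vertices of this $2$-dimensional polytope are then exactly the breakpoints of the convex piecewise-linear lower boundary $z=z^{*}(\lambda)$ (including the two endpoints at $\lambda\in\{0,\Lambda\}$) together with the two top corners $(0,Z)$ and $(\Lambda,Z)$, giving a count that differs from the number of linear pieces of $z^{*}$ by an additive constant.

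The main technical care needed is in justifying this vertex count — specifically, choosing $Z$ strictly larger than $\max_{\lambda}z^{*}(\lambda)$ so that the top edge contributes exactly two new vertices and does not intersect the lower boundary, which is easily arranged given the polynomial bit complexity of the costs in the underlying parametric instance. With this in hand, counting vertices of $\pi(P)$ is equivalent, up to an additive constant, to counting the distinct optimal bases of the parametric LP, and is therefore \noun{NP}-hard.
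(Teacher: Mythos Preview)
Your approach is correct and shares the core idea with the paper's proof: project onto a $2$-dimensional subspace in which one coordinate is the flow-value parameter and the other captures cost, so that the lower boundary of the projection is the parametric minimum-cost curve whose breakpoints are \noun{NP}-hard to count. The paper, however, takes $P$ to be the flow polytope itself (coordinates are arc flows only) and projects onto the two linear functionals ``flow value'' and ``cost $c^{T}x$''. This forces it to control the \emph{upper} envelope as well, which it does by exploiting the concrete structure of $G_{\mathrm{ssp}}^{\vec a}$: the maximum-cost $s$-$t$-paths all share the same cost and together have capacity equal to the max-flow value, so the upper envelope collapses to a single segment. Your device of introducing an explicit epigraph variable $z$ with the cap $z\le Z$ sidesteps that analysis entirely and gives a more modular reduction that does not depend on any special feature of the network. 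One point to tighten: you assert that the number of linear pieces of $z^{*}$ equals the number of distinct optimal bases, but two consecutive bases can in principle yield the same slope; the equality holds only once costs are perturbed so that every successive shortest path has a distinct cost, a step the paper carries out explicitly and then invokes Corollary~\ref{cor:ssp_iterations} rather than Corollary~\ref{cor:parametric}.
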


\begin{proof}
Let $P$ be the polytope of all feasible $s$-$t$-flows in network $G_{\mathrm{ssp}}^{\vec{a}}$ of Section~\ref{sub:ssp_construction}.
Consider the $2$-dimensional subspace $S$ defined by flow value and cost of a flow.
Let $P'$ be the projection of $P$ onto $S$.
The lower envelope of $P'$ is the parametric minimum-cost flow curve for $G_{\mathrm{ssp}}^{\vec{a}}$, while the upper envelope is the parametric maximum-cost flow curve for $G_{\mathrm{ssp}}^{\vec{a}}$.

The $s$-$t$-paths of maximum cost in $G_{\mathrm{ssp}}^{\vec{a}}$ are the four paths via $s_n,s_{n-1},t_n$ or via $s_n,t_{n-1},t_n$ in both of the gadgets.
Each of these paths has cost $2^{n-1} - \frac12$ and the total capacity of all paths together is $2^{n+1}$ which is equal to the maximum flow value from $s$ to $t$.
Therefore, the upper envelope of $P'$ consists of a single edge.

The number of edges on the lower envelope of $P'$ is equal to the number of different costs among all successive shortest paths in $G_{\mathrm{ssp}}^{\vec{a}}$.
If we slightly perturb the costs of the two arcs in $G_{\mathrm{ssp}}^{\vec{a}}$ with cost $\frac15\eps$, we can ensure that each successive shortest path has a unique cost.
The claim then follows by Corollary~\ref{cor:ssp_iterations}.
\end{proof}

\begin{corollary}
Determining the average arrival time of flow in an earliest arrival flow is \noun{NP}-hard. 
\end{corollary}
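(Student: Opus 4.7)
The plan is to interpret the network $G_{\mathrm{ssp}}^{\vec{a}}$ from Section~\ref{sec:ssp} as a flow-over-time network in which arc costs play the role of transit times. By the classical result of Minieka~\cite{Minieka73} and Wilkinson~\cite{Wilkinson71}, the \SSPA{} on such a network produces an earliest arrival flow by temporally repeating its sequence of augmenting shortest paths; by Jarvis~\cite{Jarvis82}, this earliest arrival flow is precisely the one that minimizes the average arrival time. Hence, determining the average arrival time reduces to aggregating data from the \SSPA{}'s execution, and by Lemma~\ref{lem:ssp_gadget} the augmenting paths are produced in non-decreasing cost order, exactly as required by the temporally repeated interpretation.

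First, I would write the average arrival time at horizon $T$ explicitly in terms of the execution data: if the \SSPA{} produces successive shortest-path augmentations with transit times $t_1 \le t_2 \le \dots \le t_m$ and bottleneck capacities $u_1, \dots, u_m$, then for any $T$ large enough that every path contributes,
\[
\mu(T) \;=\; \frac{\sum_j u_j (T^2 - t_j^2)}{2 \sum_j u_j (T - t_j)}.
\]
Then I would invoke Lemma~\ref{lem:ssp_construction}: if the \noun{Partition} instance $\vec{a}$ has no solution, arc $e$ is never used and the \SSPA{} produces a fixed sequence of $2^{n+1}$ augmentations within the two gadgets; if $\vec{a}$ has a solution, additional augmentations traversing $e$ enter the sequence in the iterations pinpointed by Proposition~\ref{prop:ak}, altering $\mu(T)$ in a controlled way.

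The next step is to verify that $\mu_{\mathrm{no}}(T)$ can itself be evaluated in polynomial time despite being built from exponentially many summands: the regular structure of the counting gadgets implies that the power sums $\sum_j t_j^k$ (for $k=1,2$) admit closed-form expressions in which the $\vec{a}_n^{[j]}$ cross-terms cancel pairwise, leaving polynomial-time-computable expressions in $n$ and the $a_i$'s. The yes-case value $\mu_{\mathrm{yes}}(T)$ then differs from $\mu_{\mathrm{no}}(T)$ by a polynomial-bit rational determined entirely by the extra $e$-using augmentations; by splitting arc $e$ into two parallel arcs of capacity $1/2$ with slightly different costs, as in the proof of Corollary~\ref{cor:ssp_iterations}, one can arrange for this shift to be non-zero and of predictable sign. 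Given any polynomial-time procedure for the average arrival time, one then decides \noun{Partition} by comparing the returned value to the precomputed $\mu_{\mathrm{no}}(T)$.

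The main obstacle is showing rigorously that the shift $\mu_{\mathrm{yes}}(T) - \mu_{\mathrm{no}}(T)$ is indeed non-zero and survives cancellation against the dominant contributions of the $2^{n+1}$ gadget paths. This requires careful accounting of how each extra $e$-path affects both the numerator and the denominator of $\mu(T)$, together with choosing $T$ large enough that every augmenting path is effectively temporally repeated. Once this shift is established, the polynomial-time reduction from \noun{Partition} to computing $\mu(T)$ is complete.
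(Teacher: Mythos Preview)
Your approach follows the same line as the paper's (three-sentence) sketch: interpret $G_{\mathrm{ssp}}^{\vec{a}}$ as a flow-over-time instance and use Lemma~\ref{lem:ssp_construction} to tie the average arrival time to whether arc~$e$ is ever used. The one substantive difference is how you force the yes/no values apart. You split $e$ into two half-capacity parallel arcs (borrowing from the proof of Corollary~\ref{cor:ssp_iterations}) and then compare the oracle's output to a precomputed $\mu_{\mathrm{no}}(T)$, which obliges you to argue that the exponential sums $\sum_j t_j$ and $\sum_j t_j^2$ admit closed forms. The paper instead simply perturbs the \emph{cost} of~$e$: this leaves the iteration count and all bottlenecks untouched, and the average arrival time then depends on the perturbation if and only if $e$ is used, so two oracle calls with different perturbations decide \noun{Partition} without any closed-form computation. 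Your split is not wrong, but it is an unnecessary detour that forces you to re-analyse the \SSPA{} on a modified network; in fact even the unmodified instance already works, since each yes-iteration replaces two paths of cost $j+\tfrac15\eps$ by paths of cost $j$ and $j+\tfrac25\eps$, leaving $\sum_j t_j$ unchanged but shifting $\sum_j t_j^2$ by $\tfrac{2}{25}\eps^2$.
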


\begin{proof}[Sketch]
The average arrival time can be obtained from the parametric minimum-cost flow curve considered in the proof of Corollary~\ref{cor:projection}. By slightly perturbing the cost of arc~$e$ in network $G_{\mathrm{ssp}}^{\vec{a}}$, the value of the average arrival time discloses whether~$e$ is used by the \SSPA{}. The result thus follows from Lemma~\ref{lem:ssp_construction}.
\end{proof}

\end{document}